\definecolor{Brown}{rgb}{0.55,0.0,0.10}
\definecolor{dgreen}{rgb}{0.00,0.56,0.00}
\definecolor{vertmoinsfonce}{rgb}{0.00,0.35,0.00}
\definecolor{vertclair}{rgb}{0.00,0.85,0.00}
\definecolor{llightggray}{rgb}{0.97,0.97,0.97}
\definecolor{lightggray}{rgb}{0.9,0.9,0.9}
\definecolor{ggray}{rgb}{0.5,0.5,0.5}
\definecolor{darkggray}{rgb}{0.25,0.25,0.25}
\definecolor{ddarkggray}{rgb}{0.1,0.1,0.1}
\definecolor{bleu}{rgb}{0.00,0.00,0.00}
\newtheorem{lemma}{Lemma}
\newtheorem{theorem}{Theorem}
\newtheorem{definition}[theorem]{Definition}
\newtheorem{proposition}[theorem]{Proposition}
\theoremstyle{plain}{\theorembodyfont{\rmfamily}%
}
\theoremstyle{plain}{\theorembodyfont{\rmfamily}%
\theoremstyle{plain}{
\theorembodyfont{\rmfamily}

	\newtheorem{remark}[theorem]{Remark}
	
	}


%

%









\newcommand{\ie}{i.e., }

\newcommand{\R}{\mathbb{R}}

\newcommand{\N}{\mathbb{N}}

\newcommand{\E}{\mathbb{E}}

\newcommand{\C}{\mathcal{C}}

\newcommand{\QQ}{\mathcal{Q}}

\newcommand{\D}{\mathcal{D}}

\newcommand{\PP}{\mathcal{P}}

\newcommand{\mc}{\mathcal}



%
{\Large\normalsize}
{\Large\normalsize}

\ifCLASSINFOpdf
\else
\fi
\hyphenation{op-tical net-works semi-conduc-tor}

\begin{document}
%
\title{Rate Adaptation for Secure HARQ Protocols}



\author{\IEEEauthorblockN{Ma\"{e}l Le Treust,
Leszek Szczecinski\IEEEauthorrefmark{1} and Fabrice Labeau\IEEEauthorrefmark{2}}\\
\IEEEauthorblockA{
ETIS UMR 8051, Universit\'{e} Paris Seine, Universit\'{e} Cergy-Pontoise, ENSEA, CNRS,\\
6, avenue du Ponceau, 95014 Cergy-Pontoise CEDEX, FRANCE\\
\IEEEauthorblockA{\IEEEauthorrefmark{1}
INRS, Montreal, Canada}\\
\IEEEauthorblockA{\IEEEauthorrefmark{2}
McGill University, Montreal, Canada}\\
mael.le-treust@ensea.fr, leszek@emt.inrs.ca, fabrice.labeau@mcgill.ca}

\thanks{Work supported by the government of Quebec under grant \#PSR-SIIRI-435 and SRV ENSEA 2014; conducted as part of the project Labex MME-DII (ANR11-LBX-0023-01); presented in part at the IEEE Information Theory Workshop, Sept. 2013  \cite{LeTreustSzczecinskiLabeau13}. This work was carried out, in part, when Ma\"{e}l Le Treust was a post-doctoral researcher with INRS and McGill University.}}



\maketitle

\begin{abstract}
This paper investigates the incremental-redundancy hybrid-automatic repeat request (IR-HARQ) transmission over  independent block-fading channels in the presence of an eavesdropper, where the secrecy of the transmission is ensured via introduction of dummy-messages. Since the encoder only knows the statistics of the channel state, the secrecy and the reliability are defined in a probabilistic framework. Unlike previous works on this subject, we design a coding strategy tailored to IR-HARQ by splitting the dummy-message rate over several rate parameters. These additional degrees of freedom improve the match between the dummy-message rates and the realizations of the eavesdropper channels. We evaluate the performance in terms of secrecy outage probability,  connection outage probability and throughput and we compare it with the benchmark paper by Tang et al. \cite{TangLiuSpasojevicPoor09}. Numerical examples illustrate that, comparing to existing alternatives, splitting of the dummy-message rate provides higher throughput and lower expected duration/average delay.
\end{abstract}



\begin{IEEEkeywords}
hybrid automatic repeat request,  physical layer security, state-dependent wiretap channel, channel state information, secrecy outage probability and secrecy throughput.
\end{IEEEkeywords}

%
\IEEEpeerreviewmaketitle

\section{Introduction}

This work is concerned with the transmission of information over wireless independent block-fading channels, where the channel state information (CSI), which captures the essence of channel statistics, is not available at the transmitter but can be estimated by the receivers. In such a scenario, the transmission is inherently i)~unreliable due to unpredictable fading, and ii)~unsecure due to the possibility of eavesdropping when communicating over a broadcast medium. The successful communication and the secrecy can thus only be defined/guaranteed in probabilistic terms. The principal question we want to investigate  is how the constraints on the secrecy and the reliability are related when transmissions are carried out using an incremental-redundancy hybrid-automatic repeat request (IR-HARQ) protocol, and how to construct the coding to take advantage of the additional dimension offered by retransmissions.

\subsection{State of art}\label{Sec:State.of.Art}
\textbf{Reliability and IR-HARQ}\\
Reliability is a key issue in modern communications and is deeply related to the knowledge---by the transmitters---of the channel statistics often summarized in one parameter, which defines the  CSI, e.g.,  the signal-to-noise ratio (SNR). When both encoder and decoder know the CSI it is possible to design an appropriate coding scheme that conveys information with arbitrary reliability \cite{shannon-bell-1948}. When the CSI is unavailable at the transmitter, the successful transmission cannot be guaranteed leading to the concepts of  outage probability and throughput.

To deal with unavoidable transmission errors, the so-called hybrid automatic repeat request (HARQ) protocol is often used:  a single-bit acknowledgement feedback (\textsf{Ack}/\textsf{Nack}) indicates whether the decoding was successful or not. Then, the transmitter may transmit the same message many times, till it is successfully received--the event indicated by the \textsf{Ack}. The two main classes of HARQ protocols are i) Repetition Time Diversity (RTD), which consists in repeated transmission of the same codeword, and ii) Incremental Redundancy (IR), a more powerful scheme which involves a different codebook in each transmission. HARQ protocols were analyzed in the literature from the point of view of throughput, outage probability, and average delay, e.g., \cite{TelatarGallager95,CaireTuninettiHARQ2001,ZorziRaob96,ZorziRao97,ZorziBorgonovo97, Zorzi98}.

Retransmissions in HARQ provide additional degrees of freedom which can be exploited to design a code which provides a suitable ``match'' between the transmission rate and the channel realizations. For example, in \cite{VisotskyYakunTripathiHonigPeterson05,PfletschingerNavarro10,UhlemannRasmussenGrantWiberg10, SzczecinskiKhosraviradDuhamelRahman13,JabiBenjillaliSzczecinskiLabeau16, JabiHamssSzczecinskiPiantanida15, JabiSzczecinskiBenjillaliLabeau15, Larsson16, saee,Lee15}, the length of codewords was varied throughout the retransmissions. A different approach was taken by \cite{Hausl07, Chui07, Duyck10, Trillingsgaard14, NguyenTimo15, BenyoussJabiLeTreustSzczecinski16, JabiBenyoussLeTreustDoraySzczecinski16} which kept the codeword length constant and rather relied on the design of new coding schemes to increase the throughput.

\textbf{Secrecy}\\
Security is an issue in wireless communications due to the broadcast nature of the transmission medium. An eavesdropper within the communication range can ``overhear''  the transmitted signals and extract some private information. 

Instead of using cryptographic methods to protect the message, Wyner  \cite{Wyner(Wiretap)1975} proposed to exploit the difference between the legitimate decoder and the eavesdropper channels, and characterized the rate at which the legitimate users can communicate not only reliably but also securely. The \emph{threat model} of \cite{Wyner(Wiretap)1975} refers to the one of Shannon's cipher system \cite{Shannon(secrecy)1949}, defined in an information-theoretic sense, also referred to as ``Physical Layer Security''. In particular, the eavesdropper is assumes to have arbitrary equipment and computing power and to know the existence of a message intended to the legitimate receiver \cite[pp. 656]{Shannon(secrecy)1949}. The eavesdropper is aware of the code-book used in encoding and decoding operations \cite[pp. 1355]{Wyner(Wiretap)1975}. Perfect secrecy is defined by the condition that, for any eavesdropper's observation the a-posteriori probabilities are equal to the a-priori probabilities \cite[pp. 679]{Shannon(secrecy)1949}. The goal is to exploit the intrinsic randomness of the channel in order to secure the transmission.\footnote{Note that we use here secrecy, as the only mean to guarantee the security of a transmission, but secrecy can be combined with cryptography as well.} 

The results of \cite{Wyner(Wiretap)1975} were further generalized in \cite{CsiszarKorner(BroadcastConf)78, LeungHellman(GaussianWiretap)78} under assumption of CSI knowledge, which has a significant impact on security in wireless networks  \cite{BlochBarros11}. In \cite{BlochBarrosRodriguesMcLaughlin08}, the authors proved that secure communication is possible even when the eavesdropper has, on average, a channel stronger than that of the receiver. However, the legitimate users must have perfect knowledge of their CSI and estimate the CSI of the eavesdropper. 
In \cite{KhistiTchamkertenWornell08}, the problem of broadcasting confidential messages to multiple receivers over parallel and fast-fading channels was investigated while \cite{GopalaLaiElGamal08} characterizes the secrecy capacity of slow-fading wiretap channel under different CSI assumptions. The ergodic secrecy capacity was characterized in \cite{LiangPoorShamai08} assuming full CSI at both legitimate transmitters.

The assumption of the knowledge of the eavesdropper's CSI is an idealization,\footnote{There is no reason that eavesdropper would collaborate with the legitimate users.} so \cite{LiYatesTrappeTWC10} studied the case where the channel to the eavesdropper experiences fading not known to the legitimate users. The effect of partial CSI on  achievable secure communication rates and on secret-key generation was also investigated in \cite{BlochLaneman13}, and \cite{RezkiKhistiAlouini14} provided bounds on the ergodic secrecy capacity. The case of transmission without CSI at the encoder was investigated in \cite{LinJorswieck16}, where the ergodic secrecy capacity for fast fading wiretap channel was characterized; and in  \cite{ZhouMcKayMahamHjorungnes11}, which proposed an alternative secrecy outage formulation to measure the probability that message transmission fails to achieve perfect secrecy.

\textbf{Secrecy and HARQ}\\
Retransmissions in HARQ may be used not only to increase the reliability or the throughput, but also to increase the secrecy. This issue was  investigated  in \cite{TangLiuSpasojevicPoor09} using extension of the Wyner code \cite{Wyner(Wiretap)1975} with the introduction of dummy messages. In the absence of  CSI, the coding parameters were chosen using the statistics of the CSI. Then, receiving a \textsf{Nack} feedback, the encoder retransmits the message but has no guarantee of reliability nor secrecy which are then characterized via the random events of secrecy outage and connection outage. Improvement of the secure HARQ protocol was investigated in \cite{MheichLeTreustAlbergeDuhamelSzczecinski14}, \cite{MheichLeTreustAlbergeDuhamel16} with variable-length coding and in \cite{BaldiBianchChiaraluce12} using low-density parity-check (LDPC) codes. In \cite{TomasinLaurenti14}, the authors investigate secure HARQ protocols based on multiple encoding, by using new dummy-messages at each transmission. In \cite{Choi17}, the author exploits the channel reciprocity assumption in order to transmit securely even if the channel to the eavesdropper is less noisy than the channel to the legitimate decoder. In that case, the channel state information shared by the pair of legitimate transmitters can be used as a secret key.



It is worthwhile to mention that the notion of secrecy may be defined in many different ways, including ``perfect'', ``weak'' and ``strong secrecy'' \cite{BlochBarros11}, ``effective secrecy'', ``privacy'' and ``stealth'' \cite{HouKramer14}, ``semantic security'' \cite{GoldfeldCuffPermuter16},  \cite{SenigagliesiBaldiChiaraluce17}, or ``covert communications'' \cite{WangWornellZheng16}, \cite{Bloch16}. Each of these notions provides different degrees of secrecy, based on probabilistic arguments or worst case scenarios. 

The goal of this work is not to investigate the comparison between these different notions but rather to develop a coding scheme tailored for HARQ transmissions. We use the same secrecy metric, namely ``weak secrecy'', as in the previous articles on that subject \cite{TangLiuSpasojevicPoor09}, \cite{TomasinLaurenti14} which also follow Wyner's work \cite{Wyner(Wiretap)1975}.

In this paper we investigate the canonical model of independent block-fading channels and we focus only on IR-HARQ protocol because it offers new degrees of freedom in the code design; on the other hand, these degrees of freedom are, by definition, absent from the RTD coding.

\subsection{Contributions and organizations}
A natural trade-off arises between reliability and security in the wiretap channel: when the dummy-message rate  increases, it decreases the secrecy outage probability but  increases the connection outage probability. One important drawback of the coding schemes proposed in \cite{TangLiuSpasojevicPoor09}, is that the dummy-message rate is unique and should guarantee the secrecy for a large number of possible transmissions, even if the expected duration/average delay of the transmission is much lower. In this work, we address this issue upfront and design an original wiretap code by splitting the dummy-message rate over several rate parameters. These additional degrees of freedom improve the match between the dummy-message rates and the  realization of the eavesdropper channels. The article \cite{TangLiuSpasojevicPoor09} is clearly the benchmark for this work. Our contributions are the following:
\begin{itemize}
\item We propose a novel wiretap code, called ``Adaptation-Secrecy-Rate-code'' (ASR-code) that splits the dummy-message into multiple dummy-messages and inserts them into upcoming packets. We prove that ASR-code has an arbitrarily small error probability and an arbitrarily small information leakage rate, for a whole set of channel states. In our view, the ASR-code generalizes the coding scheme presented in \cite{TangLiuSpasojevicPoor09} in a very natural manner as, for a particular choice of the dummy-message rates, the ASR-code is equivalent to the coding proposed in \cite{TangLiuSpasojevicPoor09}.
\item We characterize the trade-off between connection and secrecy outage probabilities and show the optimal rate allocation for discrete channels and for Rayleigh fading channels with one transmission. 
\item We present a numerical optimization for multiple transmissions over Rayleigh fading channel: using the splitting of the dummy-message rate, we achieve a higher throughput with a lower expected duration/average delay.
\item ASR-code provides better performances than the protocols of \cite{TangLiuSpasojevicPoor09} and \cite{TomasinLaurenti14} for discrete and Gaussian channels.
\end{itemize}

The main differences with our previous work  \cite{LeTreustSzczecinskiLabeau13} are: 
\begin{itemize}
\item We consider an arbitrary number of possible retransmissions, whereas only one retransmission was considered in \cite{LeTreustSzczecinskiLabeau13}; this affects non-trivially the expressions of connection and secrecy outages probabilities.
\item We consider a more practical case of Rayleigh block-fading channels and analyze the corresponding solutions. 
\item We provide a full version of the proof of Theorem \ref{theo:RandomLeszekCode}, while only a sketch was shown in 
\cite{LeTreustSzczecinskiLabeau13}.
\end{itemize}

 The work is organized as follows. Sec.~\ref{sec:SecureHARQ} presents the channel model under investigation, the HARQ-code and defines our new protocol called ASR-code. The main result is Theorem \ref{theo:RandomLeszekCode} which proves that the error probability and the information leakage rate converge to zero for large block length. The performance of the ASR-code is measured by the secrecy throughput and the secrecy/connection outage probability,  defined in Sec.~\ref{sec:DefThroughputOutages}. The example of a discrete channel state is shown in Sec.~\ref{sec:DiscreteChannels} whereas Rayleigh fading channels are investigated in Sec.~\ref{sec:RayleighGaussian}. Sec.~\ref{sec:conclusion} concludes the paper and the proofs of the results are stated in the Appendix.




\section{Secure HARQ Protocol}\label{sec:SecureHARQ}

We consider a HARQ protocol with $L$ possible transmissions shown schematically in Fig.~\ref{fig:StateDepParallelWiretapsState} for $L=2$. Each transmission $l \in \{1,\ldots,L\}$ corresponds to a block of $n\in\N$ symbols. Capital letter $X$ denotes the random variable, lowercase letter $x\in\mc{X}$ denotes the realization and $\mc{X}^n$ denotes the $n$-time Cartesian product of the set $\mc{X}$. The random message $M\in\mc{M}$ is uniformly distributed and $m\in\mc{M}$ denotes the realization.

 During the first transmission, the encoder $\C$ uses the sequence of input symbols $x_1^n \in \mc{X}^n$ in order to transmit the message $m\in \mc{M}$ to the legitimate decoder $\D$. The decoder $\mc{D}$ (resp. eavesdropper $\mc{E}$)   observes the sequence of channel outputs $y_1^n \in \mc{Y}^n$ (resp. $z_1^n \in \mc{Z}^n$) and tries to decode (resp. to infer) the transmitted message $m\in \mc{M}$. The decoder $\D$ sends a $\textsf{Ack}_1/\textsf{Nack}_1 $ feedback over a perfect channel that indicates to the encoder, whether the first transmission was correctly decoded or not.  
 
 If the encoder receives a $\textsf{Nack}_{l-1}$ feedback after $l-1 \in \{1,\ldots,L\}$ transmissions, then the message $m\in \mc{M}$ was not correctly decoded yet. The encoder starts retransmitting  the message  $m\in \mc{M}$ over transmission $l \in \{2,\ldots,L\}$  with input sequence $x_{l}^{ n} \in \mc{X}^{ n}$. The decoder $\D$ (resp. eavesdropper $\mc{E}$) tries to decode (resp. to infer) the transmitted message $m\in \mc{M}$ from sequences of channel outputs $(y_1^n, y_2^{ n} ,\ldots y_l^n) \in  \mc{Y}^{l \times n}$  (resp. $(z_1^n  , z_2^n,\ldots z_l^n) \in \mc{Z}^{l \times n}$), where $\mc{Y}^{l \times n}  = \overbrace{\mc{Y}^n \times  \ldots  \times \mc{Y}^n}^\text{$l$}$ is the $l$-time Cartesian self-product of set $\mc{Y}^n$. If the maximal number of transmissions $L$ is attained, the encoder drops message $m \in \mc{M}$ and starts sending the next message $m' \in \mc{M}$. The notation $\Delta(\mc{X})$ stands for the set of the probability distributions $\PP(X)$ over the set $\mc{X}$. We assume that the channel is memoryless  with transition probability $\mc{T}(y,z|x,k)$ depending on a state parameter $k\in \mc{K}$, for example a fading coefficient. The state parameters $(k_1, k_2,\ldots, k_L ) \in \mc{K}^L $ stay constant during the transmission of a block of $n\in\N$ symbols and are chosen at random with i.i.d. probability distribution $\mc{P}_k \in \Delta(\mc{K})$, from one block to another. The state parameters $(k_1, k_2,\ldots, k_L ) \in \mc{K}^L $ are observed by the decoder and the eavesdropper but not by the encoder. 

At transmission $l\in\{1,\ldots,L\}$, the state-dependent wiretap channel is given by 
\begin{eqnarray}
\mc{T}^{n}(y_l^n,z_l^n|x_l^n,\textcolor[rgb]{0.00,0.00,0.00}{k_l}) = \prod_{i=1}^n\mc{T}(y_l(i),z_l(i)|x_l(i),\textcolor[rgb]{0.00,0.00,0.00}{k_l}),\label{eq:transition}
\end{eqnarray}
where $x_l(i)$ (resp. $y_l(i)$, $z_l(i)$) denotes the $i$-th symbol of the transmission block $x_l$ (resp. $y_l$, $z_l$) of length $n$. The channel statistics are known by both encoder $\C$ and decoder $\D$.

\begin{figure}[tb]
\begin{center}
\psset{xunit=0.9cm,yunit=0.5cm}
\begin{pspicture}(-0.5,-2.8)(8.5,3.4)
\psframe(0,0)(1,2)
\psframe(7,0)(8,2)
\pscircle[linecolor = blue](4.5,1.5){0.45}
\pscircle[linecolor = red](3.5,0.5){0.45}
\psline[linewidth=1pt]{->}(-1,1)(0,1)
\psline[linewidth=1pt]{->}(8,1)(9,1)
\psline[linewidth=1pt,linecolor = red]{->}(1,0.5)(3,0.5)
\psline[linewidth=1pt,linecolor = red]{->}(4,0.5)(7,0.5)
\psline[linewidth=1pt,linecolor = blue]{->}(1,1.5)(4,1.5)
\psline[linewidth=1pt,linecolor = blue]{->}(5,1.5)(7,1.5)
\rput[u](4.5,1.5){$\textcolor[rgb]{0.00,0.00,1.00}{\mc{T}_1}$}
\rput[u](3.5,0.5){$\textcolor[rgb]{1.00,0.00,0.00}{\mc{T}_2}$}
\rput[u](0.5,1){$\mc{C}$}
\rput[u](7.5,1){$\mc{D}$}
\rput[u](-0.5,1.4){$M$}
\rput[u](8.5,1.6){$\hat{M}$}
\rput[u](1.5,1.9){$\textcolor[rgb]{0.00,0.00,1.00}{X_1^n}$}
\rput[u](1.5,0.9){$\textcolor[rgb]{1.00,0.00,0.00}{X_2^{ n}}$}
\rput[u](6.5,1.9){$\textcolor[rgb]{0.00,0.00,1.00}{Y_1^n}$}
\rput[u](6.5,0.9){$\textcolor[rgb]{1.00,0.00,0.00}{Y_2^{ n}}$}
\psframe(7,-3)(8,-1)
\psline[linewidth=1pt,linecolor = red]{->}(3.5,-0.35)(3.5,-2.5)(7,-2.5)
\psline[linewidth=1pt,linecolor = blue]{->}(4.5,0.65)(4.5,-1.5)(7,-1.5)
\rput[u](6.5,-1.1){$\textcolor[rgb]{0.00,0.00,1.00}{Z_1^n}$}
\rput[u](6.5,-2.1){$\textcolor[rgb]{1.00,0.00,0.00}{Z_2^{ n}}$}
\rput[u](7.5,-2){$\mc{E}$}
\psline[linewidth=1pt,linecolor = vertmoinsfonce]{-}(0.5,3.5)(7.7,3.5)
\psline[linewidth=1pt,linecolor = vertmoinsfonce]{-}(7.7,3.5)(7.7,2)
\psline[linewidth=1pt,linecolor = vertmoinsfonce]{->}(0.5,3.5)(0.5,2)
\rput[u](1.6,3.1){\textcolor[rgb]{0.00,0.50,0.00}{\textsf{Ack}/\textsf{Nack}}}
\psline[linewidth=1pt,linecolor = blue]{->}(4.5,2.8)(4.5,2.35)
\rput[u](4.5,3.1){$\textcolor[rgb]{0.00,0.00,1.00}{k_1}$}
\psline[linewidth=1pt,linecolor = red]{->}(3.5,2.8)(3.5,1.35)
\rput[u](3.5,3.1){$\textcolor[rgb]{1.00,0.00,0.00}{k_2}$}
\psline[linewidth=1pt,linecolor = blue]{<->}(7.3,0)(7.3,-1)
\psline[linewidth=1pt,linecolor = red]{<->}(7.5,0)(7.5,-1)
\psdots[linecolor = red](7.5,-0.45)
\psdots[linecolor = blue](7.3,-0.55)
\psline[linewidth=1pt,linecolor = red]{-}(8.5,-0.45)(7.5,-0.45)
\psline[linewidth=1pt,linecolor = blue]{-}(8.5,-0.55)(7.3,-0.55)\
\rput[d](8.8,-0.8){$\textcolor[rgb]{0.00,0.00,1.00}{k_1}$}
\rput[u](8.8,-0.2){$\textcolor[rgb]{1.00,0.00,0.00}{k_2}$}
\end{pspicture}
\caption{State dependent wiretap channels $\mc{T}_i(y_i,z_i|x_i,k_i)$, with $i\in \{1,2\}$.
The second transmission starts if the encoder $\C$ receives a \textcolor[rgb]{0.00,0.00,0.00}{\textsf{Nack}} feedback from the legitimate decoder. The state parameters $\textcolor[rgb]{0.00,0.00,0.00}{k_1 \in \mc{K}_1}$ and $\textcolor[rgb]{0.00,0.00,0.00}{k_2 \in \mc{K}_2}$ are chosen arbitrarily, stay constant during the transmission and are available only at the legitimate decoder $\D$ and at the eavesdropper $\mc{E}$.}
\label{fig:StateDepParallelWiretapsState}
\end{center}
\end{figure}
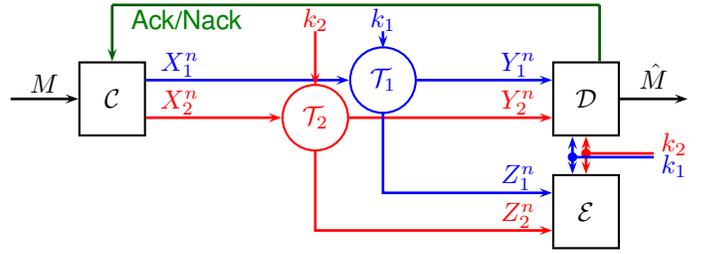

\begin{definition}\label{def:CodeLz}
A HARQ-code $c_n \in\mc{C}(n,\textsf{R},L)$ with stochastic encoder is a vector of encoding and decoding  functions $c_n=\Big( (f_l)_{l\in \{1,\ldots,L\}}, (g_l)_{l\in \{1,\ldots,L\}})$, defined for each transmission $l\in \{1,\ldots,L\}$  as follows:
\begin{eqnarray}
&f_l& : \mc{M}\times  \mc{X}^{(l-1)\times  n} \times \{\textsf{Ack},\textsf{Nack}\}^{l-1} \rightarrow \Delta(\mc{X}^{ n}),\label{eq:Code3Lz}\\
&g_l& : \mc{Y}^{l\times n} \times  \mc{K}^l \longrightarrow {\mc{M}}   \times \{\textsf{Ack},\textsf{Nack}\} ,\label{eq:Code4Lz}
\end{eqnarray}
where the rate $\textsf{R} $ defines the cardinality $ |\mc{M}|= 2^{n \textsf{R}}$ of the set of messages $\mc{M}$ and $L$ is the maximal number of transmissions. We denote by $\mc{C}(n,\textsf{R},L)$, the set of HARQ-codes with stochastic encoder. 
\end{definition}

\begin{definition}\label{def:RandomPropertyCode}
For each vector of state parameters $(k_1, \ldots, k_L ) \in \mc{K}^L$, the error probability $\mc{P}_{\textsf{e}}$ and the information leakage rate $\mc{L}_{\textsf{e}}$ of the  HARQ-code $c_n \in\mc{C}(n,\textsf{R},L)$ are defined by:
\begin{eqnarray*}
\mc{P}_{\textsf{e}}\big({c}_n \big| k_1,\ldots,k_L \big)= \PP\Big(M\neq\hat{M}\Big|\;c_n, k_1,\ldots,k_L \;\Big),\label{eq:DefErrorProbaLz}\\
\mc{L}_{\textsf{e}}\big({c}_n \big| k_1,\ldots,k_L \big)= \frac{I\Big( M ; Z_1^n,\ldots , Z_L^n \Big| \;c_n, k_1,\ldots,k_L\; \Big)}{n}.\label{eq:DefLeakageRateLz}
\end{eqnarray*}
The random variable $\hat{M}$ denotes the output message of the legitimate decoder. Depending on the number of transmissions $l \in \{1,\ldots,L\}$, it is given by $\hat{M} = g_l(Y_1^n,\ldots,Y_l^n ,k_1 ,\ldots, k_l)$. A non-zero leakage rate means that the eavesdropper can infer some information about the message $M$, which is undesirable.
\end{definition}


In \cite{TangLiuSpasojevicPoor09}, the authors prove the existence of a HARQ-code that has small error probability and small information leakage rate for a whole range of channel states $(k_1, \ldots, k_L ) \in \mc{K}^L$. The coding scheme is based on Wyner's coding for the wiretap channel \cite{Wyner(Wiretap)1975} and involves two parameters:  the rate  $ \sf{R_s}\geq0$, which is called the ``secrecy rate'' and corresponds to the amount of secret information to be transmitted to the legitimate decoder; and the rate $ \sf{R_0}\geq0$ which corresponds to the total size of the codebook. The difference $\sf{R_0} -  \sf{R_s}\geq0$ is called the ``dummy-message rate'' and corresponds to the amount of randomness that will be introduced in the codebook, in order to confuse the eavesdropper. 

Then, the conditions which are sufficient for the transmission to be reliable and secure, given by
\begin{eqnarray}
\sf{R_0}  &\leq&\sum_{j\in 1}^L  I(X_j;Y_j|k_j) ,  \label{eq:TransmissionRate}  \\
\sf{R_0}   - \sf{R_s} &\geq&  \sum_{j\in 1}^L  I(X_j;Z_j|k_j), \label{eq:SecrecyRate}
 \end{eqnarray}
define ``the secure channel set'' \cite[Definition 2]{TangLiuSpasojevicPoor09}.

 
We note that \eqref{eq:SecrecyRate} enforces a high value of the dummy-message rate $\sf{R_0}   - \sf{R_s}$ which must guarantee the secrecy for the maximal number of transmissions $L$. This, in turn, prevents  the first transmissions from being reliable, especially when the number of possible transmissions $L$ is large. 

From this observation stems the main contribution of our work which consists in splitting the dummy-message rate $\sf{R_0} - \sf{R_s}$ over $L$ different parameters denoted by $ \textsf{R}_{1}, \textsf{R}_{2}, \ldots , \textsf{R}_{L} $. Splitting the dummy-message rate makes the first transmissions more reliable, since the first dummy-message rates can be smaller than $\sf{R_0} - \sf{R_s}$ in \cite{TangLiuSpasojevicPoor09}.
 The price is paid by a more complex encoding/decoding; also the outage analysis is more involved, since the $L$ dummy-message rate parameters induce $L$ constraints, stated in equations \eqref{eq:channelstates2} - \eqref{eq:channelstates4} of Definition \ref{def:ChannelStatesLz}.

\begin{definition}[Channel States]\label{def:ChannelStatesLz}
For a fixed number of transmissions $l\in \{1 , \ldots, L \}$, fixed parameters $\varepsilon$, $\textsf{R}$, $ \textcolor[rgb]{0.00,0.00,0.00}{\textsf{R}_{1}}, \ldots , \textcolor[rgb]{0.00,0.00,0.00}{\textsf{R}_{L}}$ and a fixed probability distributions $\PP_{\sf{x}}^{\star}  \in \Delta(\mc{X})$, the set of secure channel states, denoted by $\mc{S}_l( \varepsilon,  \textsf{R} , \textcolor[rgb]{0.00,0.00,0.00}{\textsf{R}_{1}}, \ldots , \textcolor[rgb]{0.00,0.00,0.00}{\textsf{R}_{L}} ,  \PP_{\sf{x}}^{\star})$, is the union of channel states $(k_1,\ldots, k_l) \in \mc{K}^l$ that satisfy the following set of  equations:
\begin{eqnarray}
\textsf{R} +  \sum_{j= 1}^l \textcolor[rgb]{0.00,0.00,0.00}{\textsf{R}_{j}} &\leq&   \sum_{j= 1}^l  I(X_j;Y_j|k_j)  - \varepsilon, \label{eq:channelstates1}\\
\sum_{j= 1}^l \textcolor[rgb]{0.00,0.00,0.00}{\textsf{R}_{j}} &\geq&   \sum_{j= 1}^l  I(X_j;Z_j|k_j) - \varepsilon,  \label{eq:channelstates2} \\
\sum_{j= 1}^{l-1} \textcolor[rgb]{0.00,0.00,0.00}{\textsf{R}_{j}} &\geq&   \sum_{j= 1}^{l-1}  I(X_j;Z_j|k_j) - \varepsilon,  \label{eq:channelstates3} \\
&\vdots&\nonumber \\
\textcolor[rgb]{0.00,0.00,0.00}{\textsf{R}_{1}}&\geq&   I(X_1;Z_1|k_1) - \varepsilon. \label{eq:channelstates4} 
 \end{eqnarray}
\end{definition}
Equation \eqref{eq:channelstates1} guarantees the correct decoding whereas equations \eqref{eq:channelstates2} - \eqref{eq:channelstates4} guarantee that the secrecy condition is satisfied at each transmission $l = \{1, \ldots, L\}$. We note that \eqref{eq:channelstates1} - \eqref{eq:channelstates4} generalize equations of \cite{TangLiuSpasojevicPoor09}. That is, using $\textsf{R}_{2}= \ldots =  \textsf{R}_{L} = 0$, $\textsf{R}_{1} = \textsf{R}_{\sf{0}}  -  \textsf{R}_{\sf{s}} $ and $\textsf{R}  = \textsf{R}_{\sf{s}} $ we obtain \eqref{eq:TransmissionRate} and \eqref{eq:SecrecyRate}.\footnote{Where, formally, $\varepsilon$ should also be added as in \eqref{eq:channelstates1} - \eqref{eq:channelstates4}.}

These conditions are represented graphically in Fig.~\ref{fig:RateRegions} for $L=2$ transmissions.

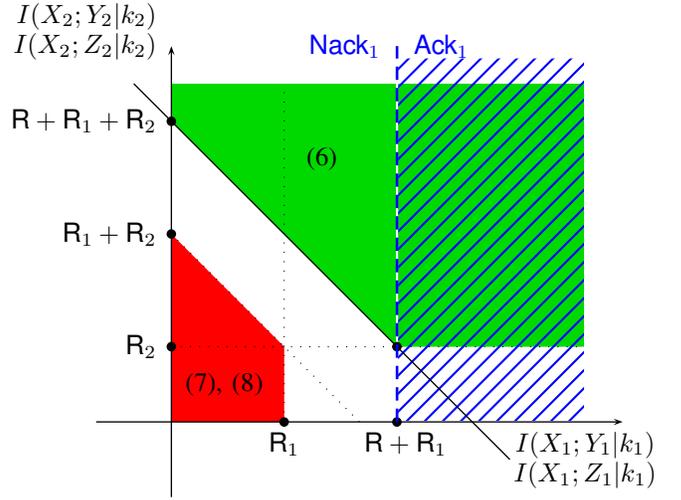
\begin{figure}[!ht]
\begin{center}
\psset{xunit=1cm,yunit=1cm}
\begin{pspicture}(-1,-1)(5,5.5)
\pspolygon*[linecolor=vertclair](0,4.5)(0,4)(3,1)(5.5,1)(5.5,4.5) 
\pspolygon*[linecolor=red](0,0)(0,2.5)(1.5,1)(1.5,0)
\pspolygon[fillstyle=hlines,hatchcolor=blue,linecolor=white](3,0)(5.5,0)(5.5,4.85)(3,4.85)
\psline[linewidth=0.5pt]{->}(-1,0)(6,0)
\psline[linewidth=0.5pt]{->}(0,-1)(0,5)
\rput[d](5.5,-0.3){$I(X_1;Y_1|k_1)$}
\rput[d](5.5,-0.7){$I(X_1;Z_1|k_1)$}
\rput[r](-0.2,5.4){$I(X_2;Y_2|k_2)$}
\rput[r](-0.2,5){$I(X_2;Z_2|k_2)$}
\psline[linewidth=0.5pt](-0.5,4.5)(4.5,-0.5)
\psdots(1.5,0)(0,4)(0,1)(0,2.5)(3,0)(3,1)
\rput[d](1.5,-0.3){$\textsf{R}_{1}$}
\rput[u](2,3.5){\eqref{eq:channelstates1}}
\rput[u](0.7,0.5){\eqref{eq:channelstates2}, \eqref{eq:channelstates3}}
\rput[r](-0.2,1){$\textsf{R}_{2}$}
\rput[r](-0.2,2.5){$\textsf{R}_{1} + \textsf{R}_{2}$}
\rput[r](-0.2,4){$\textsf{R} + \textsf{R}_{1} + \textsf{R}_{2}$}
\psline[linewidth=0.5pt,linestyle=dotted](0,1)(5.5,1)
\psline[linewidth=0.5pt,linestyle=dotted](1.5,0)(1.5,4.5)
\psline[linewidth=0.5pt,linestyle=dotted](2.5,0)(0,2.5)
\psline[linewidth=1pt,linestyle=dashed,linecolor = blue](3,0)(3,5)
\rput[d](3,-0.3){ $\; \textsf{R} + \textsf{R}_{1}$}
\rput[l](3,5){ $\;\textcolor[rgb]{0.00,0.00,1.00}{\textsf{Ack}_1}$}
\rput[r](3,5){$ \textcolor[rgb]{0.00,0.00,1.00}{\textsf{Nack}_1} \;$ }
\end{pspicture}
\caption{Decoding and secrecy regions corresponding to the rates $(\textsf{R} ,  \textsf{R}_{1}, \textsf{R}_{2})$, for $L=2$ transmissions. The second transmission starts only if there is a $\textsf{Nack}_1$, hence we disregard the dashed region of $\textsf{Ack}_1$. The green upper region corresponds to the decoding constraint of equation \eqref{eq:channelstates1} for the mutual informations $I(X_1;Y_1|k_1)$ and $I(X_2;Y_2|k_2)$. The red lower region corresponds to the secrecy constraints of equations \eqref{eq:channelstates2}, \eqref{eq:channelstates3} for the mutual informations $I(X_1;Z_1|k_1)$ and $I(X_2;Z_2|k_2)$.}
\label{fig:RateRegions}
\end{center}
\end{figure}

We now prove the existence of a HARQ-code such that the error probability $\mc{P}_{\textsf{e}}$ and the information leakage rate $\mc{L}_{\textsf{e}}$ converge to zero, for all tuples of channel states $(k_1,\ldots, k_L)$ that belong to  $\bigcup_{l = 1}^L \mc{S}_l( \varepsilon,  \textsf{R} , \textsf{R}_{1}, \ldots , \textsf{R}_{L} ,  \PP_{\sf{x}}^{\star})$.

\begin{theorem}[Compound Wiretap Channel]\label{theo:RandomLeszekCode}
Fix the parameters $\textsf{R}$, $ \textcolor[rgb]{0.00,0.00,0.00}{\textsf{R}_{1}}, \ldots , \textcolor[rgb]{0.00,0.00,0.00}{\textsf{R}_{L}}$  and the input probability distribution $\PP_{\sf{x}}^{\star} \in \Delta(\mc{X})$. For all $\varepsilon>0$, there exists a length $\bar{n}\in \N$ such that for all $n\geq \bar{n}$, there exists a HARQ-code ${c}^{\star}_n \in \mc{C}(n,\textsf{R},L)$ that satisfies equations \eqref{eq:WynerEquivocLz}, for all channel states $(k_1,\ldots, k_L) \in \bigcup_{l = 1}^L \mc{S}_l( \varepsilon,  \textsf{R} , \textcolor[rgb]{0.00,0.00,0.00}{\textsf{R}_{1}}, \ldots , \textcolor[rgb]{0.00,0.00,0.00}{\textsf{R}_{L}} ,  \PP_{\sf{x}}^{\star})$.
\begin{eqnarray}
\mc{P}_{\textsf{e}}\bigg({c}^{\star}_n \; \bigg|k_1,\ldots , k_L\bigg) \leq \varepsilon, \quad 
\mc{L}_{\textsf{e}}\bigg({c}^{\star}_n  \; \bigg|k_1,\ldots , k_L\bigg)  \leq \varepsilon. \label{eq:WynerEquivocLz}
\end{eqnarray}
\end{theorem}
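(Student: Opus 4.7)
My plan is a random coding argument that generalizes the Wyner wiretap scheme \cite{Wyner(Wiretap)1975} from a single block to the nested, multi-round structure imposed by IR-HARQ. First I would generate, for every $j\in\{1,\ldots,L\}$ and every tuple $(m,m_1,\ldots,m_j)$ with $m\in\{1,\ldots,2^{n\textsf{R}}\}$ and $m_i\in\{1,\ldots,2^{n\textsf{R}_i}\}$, a block $x_j^{n}(m,m_1,\ldots,m_j)$ with i.i.d.\ entries drawn from $\PP_{\sf{x}}^{\star}$, mutually independently across tuples and blocks. The stochastic encoder then draws the dummies $M_j$ uniformly and independently from $\{1,\ldots,2^{n\textsf{R}_j}\}$, and upon receiving a NACK after transmission $j-1$ sends $X_j^{n}=x_j^{n}(M,M_1,\ldots,M_j)$. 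The legitimate decoder runs a joint typicality decoder on $(Y_1^{n},\ldots,Y_l^{n})$ over the full tuple $(m,m_1,\ldots,m_l)$, using the observed state parameters.

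For the reliability bound I would apply the packing lemma: the number of competing tuples at transmission $l$ is $2^{n(\textsf{R}+\sum_{j=1}^{l}\textsf{R}_j)}$, each jointly typical with $(Y_1^{n},\ldots,Y_l^{n})$ with probability at most $2^{-n(\sum_{j=1}^{l}I(X_j;Y_j|k_j)-\delta)}$. Condition \eqref{eq:channelstates1} makes the resulting error exponent strictly negative, hence $\mc{P}_{\textsf{e}}\le\varepsilon$ for every $(k_1,\ldots,k_l)\in\mc{S}_l$ and $n$ large enough.

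For the secrecy bound I would show that the conditional distribution $P_{Z_1^{n},\ldots,Z_l^{n}\mid M=m}$ is close in total variation to the product law $\prod_{j=1}^{l}P_{Z_j|k_j}^{\otimes n}$ uniformly in $m$, and then invoke Pinsker's inequality. Using the chain rule for total variation I would decompose
\begin{equation*}
\bigl\|P_{Z^{l}|M=m}-\textstyle\prod_{j=1}^{l}P_{Z_j|k_j}^{\otimes n}\bigr\|_1\le\sum_{j=1}^{l}\mathbb{E}\bigl\|P_{Z_j^{n}|Z^{j-1},M=m}-P_{Z_j|k_j}^{\otimes n}\bigr\|_1,
\end{equation*}
and bound each summand by a level-$j$ application of the soft covering (resolvability) lemma. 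The effective amount of randomness available at level $j$ given $(M,Z^{j-1})$ is $H(M_1,\ldots,M_j\mid M,Z^{j-1})$, which, once the earlier soft-covering steps are absorbed, exceeds $n\sum_{i=1}^{j}\textsf{R}_i-n\sum_{i<j}I(X_i;Z_i|k_i)$. The level-$j$ contribution then vanishes whenever $\sum_{i=1}^{j}\textsf{R}_i>\sum_{i=1}^{j}I(X_i;Z_i|k_i)$, which is exactly the $j$th cumulative hypothesis in Definition \ref{def:ChannelStatesLz}. Requiring all $l$ contributions to vanish simultaneously is what forces the full cascade \eqref{eq:channelstates2}-\eqref{eq:channelstates4}. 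A standard expurgation over the random codebook then turns bounds in expectation into a deterministic code $c^{\star}_n\in\mc{C}(n,\textsf{R},L)$ satisfying \eqref{eq:WynerEquivocLz} uniformly on $\bigcup_{l=1}^{L}\mc{S}_l$.

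The hard part will be the chained soft-covering step: because $X_j^{n}$ depends on all past dummies $(M_1,\ldots,M_{j-1})$ through the nested construction, the level-$j$ soft covering cannot be applied in one shot on an i.i.d.\ codebook but must be carried out \emph{conditionally} on $(M,Z^{j-1})$. One must therefore track the posterior of the past dummies given the eavesdropper's accumulated observations and propagate the errors of the earlier soft-covering steps through the chain. This is exactly what makes the full cascade \eqref{eq:channelstates2}-\eqref{eq:channelstates4} necessary rather than the single condition at level $l$ that sufficed for the classical Wyner bound and for the benchmark result of \cite{TangLiuSpasojevicPoor09}; the careful bookkeeping of the $l$ cascaded bounds, with a rate profile $(\textsf{R}_1,\ldots,\textsf{R}_L)$ fixed in advance while the compatible channel-state region is large, is the main technical challenge.
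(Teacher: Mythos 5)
Your codebook construction (nested i.i.d.\ generation of $x_j^n(m,m_1,\ldots,m_j)$, uniform dummy indices, joint-typicality decoding over the accumulated blocks) and your reliability argument via the packing lemma are essentially identical to the paper's proof; condition \eqref{eq:channelstates1} plays exactly the role you assign to it, and the paper additionally uses the failure of the level-$(l-1)$ decoding condition (the \textsf{Nack}) to control the residual error events, e.g.\ \eqref{eq:channelstates7Lz}. Where you genuinely diverge is the secrecy bound. The paper does \emph{not} use resolvability: it computes the equivocation directly, writing $n\,\E_c[\mc{L}_{\textsf{e}}]$ as $I(M,W_1,W_2;Z_1^n,Z_2^n|C,k_1,k_2)-H(W_1,W_2|M,C,k_1,k_2)+H(W_1,W_2|M,C,Z_1^n,Z_2^n,k_1,k_2)$, and handles the last term by re-partitioning each dummy index $W_j$ into a pair $(V_j,J_j)$ whose rates $\textsf{R}_{\sf{V}_j},\textsf{R}_{\sf{J}_j}$ are chosen \emph{as functions of the realized} $I(X_j;Z_j|k_j)$, so that the eavesdropper, given $M$ and $(J_1,J_2)$, can decode $(V_1,V_2)$ by a packing argument; Fano's inequality then kills the residual entropy. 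This adaptive re-binning is precisely the paper's substitute for your ``chained conditional soft covering,'' and it is where the cascade \eqref{eq:channelstates2}--\eqref{eq:channelstates4} enters (it guarantees $\textsf{R}_{\sf{J}_1}\geq 0$). The Fano route buys a self-contained finite-blocklength argument with explicit constants; your route, if completed, would buy the stronger total-variation (near strong-secrecy) guarantee.

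However, as written your secrecy argument has two gaps. First, the crux --- the level-$j$ soft-covering step carried out conditionally on $(M,Z^{j-1})$ --- is identified but not executed. The posterior of $(M_1,\ldots,M_{j-1})$ given the eavesdropper's past observations is not uniform, the standard soft-covering lemma does not apply to a sub-codebook selected by a data-dependent posterior, and the claim that the ``effective randomness'' exceeds $n\sum_{i\le j}\textsf{R}_i-n\sum_{i<j}I(X_i;Z_i|k_i)$ is exactly the statement that needs proof; deferring it as ``the main technical challenge'' leaves the theorem unproved. Second, Pinsker's inequality points in the wrong direction for your purposes: it bounds total variation by divergence, whereas you need to pass from small total variation to small mutual information, which requires the uniform-continuity bound on entropy (with its $\log|\mc{Z}|^{ln}$ factor) or, alternatively, running the resolvability argument in KL divergence from the start. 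Neither issue is fatal to the strategy --- both are fixable with known techniques --- but in their current form they are missing steps, and the paper's $(V_j,J_j)$ decomposition is the concrete device you would need to replicate, in one form or another, to make the bookkeeping close.
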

The proof of Theorem \ref{theo:RandomLeszekCode}, stated in Appendix~\ref{sec:proofTheoLz}, involves a new multilevel coding argument that cannot be obtained as a generalization of the coding scheme of  \cite[Appendix A]{TangLiuSpasojevicPoor09}.


In the rest of this article, the optimal sequence of HARQ-codes ${c}^{\star} = ({c}^{\star}_n)_{n\geq 1} $ is called ``Adaptation-Secrecy-Rate-code'' (ASR-code) with parameters  $ \textsf{R},\textcolor[rgb]{0.00,0.00,0.00}{\textsf{R}_{1}}, \ldots, \textcolor[rgb]{0.00,0.00,0.00}{\textsf{R}_{L}}$. The additional degrees of freedom $\textsf{R}_{2}, \ldots ,  \textsf{R}_{L} $ will be exploited to increase the secrecy throughput and to lower the expected number of transmissions and the connection and secrecy outages.


\section{Secrecy Throughput, Connection and Secrecy Outages}\label{sec:ThroughputOutages}

\subsection{Definitions}\label{sec:DefThroughputOutages}

The channels under investigation are controlled by a state parameter $k\in\mc{K}$ observed by the decoder and by the eavesdropper but not by the encoder. We investigate the secure transmission over this state-dependent wiretap channel based on the outage approach. In this setting, the quality of the channel of the eavesdropper is not known by the legitimate encoder and decoder. We introduce the  events $(\mc{A}_l)_{l\in \{1,\ldots, L\}}$ corresponding to the correct decoding \eqref{eq:InformationEventA} and the events $(\mc{B}_l)_{l\in \{1,\ldots, L\}}$ corresponding to the secret transmission  \eqref{eq:InformationEventB}.
\begin{eqnarray}
\mc{A}_l &=&  \bigg\{ \textsf{R}  + \sum_{j\in 1}^l \textcolor[rgb]{0.00,0.00,0.00}{\sf{R_{j}}}  \leq \sum_{j\in 1}^l  I(X_j;Y_j|k_j)  \bigg\}, \label{eq:InformationEventA} \\
\mc{B}_l &=&  \bigg\{\sum_{j\in 1}^l \textcolor[rgb]{0.00,0.00,0.00}{\sf{R_{j}}}  \geq \sum_{j\in 1}^l  I(X_j;Z_j|k_j)  \bigg\} ,\label{eq:InformationEventB}
\end{eqnarray}

\begin{definition}\label{def:OutageEvents} 
The connection outage probability $\mc{P}_{\sf{co}}$ and secrecy outage probability $\mc{P}_{\sf{so}}$ are defined by: 
\begin{eqnarray}
\mc{P}_{\sf{co}} = \mc{P}\bigg(   \bigcap_{l=1}^L \mc{A}_l^c  \bigg), \qquad 
\mc{P}_{\sf{so}}=  \mc{P}\bigg( \bigcup_{l=1}^L \mc{B}_l^c  \bigg) . \label{eq:SecrecyOutageEvent}
\end{eqnarray}
\end{definition}
A connection outage occurs if {for all} transmissions $l\in \{1,\ldots,L\}$, the decoding event $\mc{A}_l$ is not satisfied. A secrecy outage occurs if {there exists} a transmission $l\in \{1,\ldots,L\}$, for which the secrecy event $\mc{B}_l$ is not satisfied.

\begin{remark}
Notation $\mc{A}^c$ stands for the complementary of $\mc{A}$.
Letting the parameters $\textsf{R}_{2}=  \ldots = \textcolor[rgb]{0.00,0.00,0.00}{\textsf{R}_{L}} = 0 $, this implies that $\mc{A}_{l-1}\subset \mc{A}_l$,  $\mc{B}_l\subset \mc{B}_{l-1}$ and the definitions of $\mc{P}_{\sf{co}}$ and $\mc{P}_{\sf{so}}$ reduce to those shown in  \cite[Eqs.~(21), (22)]{TangLiuSpasojevicPoor09}. 
\end{remark}

\begin{proposition}\label{prop:SecrecyOutage}
Suppose that the random events  $(\mc{B}_l)_{l\in \{1,\ldots, L\}}$ are independent of the random events $(\mc{A}_l)_{l\in \{1,\ldots, L\}}$. The secrecy outage probability writes:
\begin{small}
\begin{eqnarray}
 \mc{P}_{\sf{so}} &=&  1 -  \sum_{j=2}^{L-1}    \mc{P}\bigg(   \bigcap_{i = 1}^j  \mc{B}_i  \bigg)  \cdot \Bigg(  \mc{P} \bigg(   \bigcap_{i = 1}^{j-1}  \mc{A}^c_i  \bigg) -  \mc{P}\bigg(  \bigcap_{i = 1}^{j}  \mc{A}^c_i   \bigg) \Bigg)\nonumber \\ 
 &-&\mc{P}\bigg(   \mc{B}_1  \bigg)  \cdot  \mc{P} \bigg(     \mc{A}_1  \bigg) -  \mc{P}\bigg(   \bigcap_{i = 1}^L  \mc{B}_i  \bigg)  \cdot  \mc{P} \bigg(   \bigcap_{i = 1}^{L-1}  \mc{A}^c_i  \bigg) .
    \end{eqnarray} 
  \end{small}
 \end{proposition}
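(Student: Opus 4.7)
The plan is to partition the sample space according to the random HARQ stopping time $T=\min\{\,l\in\{1,\ldots,L\}:\mc{A}_l\text{ occurs}\,\}$, with the convention $T=\infty$ whenever no $\mc{A}_l$ occurs (connection outage). The events $\{T=1\},\ldots,\{T=L\},\{T=\infty\}$ form a disjoint covering of the probability space. The first step is to express the ``no secrecy outage'' event accordingly: for the ASR-code, Theorem~\ref{theo:RandomLeszekCode} guarantees a vanishing leakage rate only when the channel state belongs to $\mc{S}_l$, which by \eqref{eq:channelstates2}--\eqref{eq:channelstates4} requires the \emph{entire} cumulative family $\mc{B}_1\cap\cdots\cap\mc{B}_l$ to hold, not only $\mc{B}_l$. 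Hence the good event writes
\[
\bigsqcup_{l=1}^{L}\Bigl(\{T=l\}\cap\bigcap_{i=1}^{l}\mc{B}_i\Bigr)\;\sqcup\;\Bigl(\{T=\infty\}\cap\bigcap_{i=1}^{L}\mc{B}_i\Bigr).
\]

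The second step is to factor the resulting probabilities using the independence assumption between the decoder-side family $(\mc{A}_l)_l$ and the eavesdropper-side family $(\mc{B}_l)_l$, and to express the stopping-time marginals via the telescopic identity $\mc{P}(T=l)=\mc{P}\bigl(\bigcap_{i=1}^{l-1}\mc{A}_i^c\bigr)-\mc{P}\bigl(\bigcap_{i=1}^{l}\mc{A}_i^c\bigr)$ for $l\ge 2$, together with $\mc{P}(T=1)=\mc{P}(\mc{A}_1)$ and $\mc{P}(T=\infty)=\mc{P}\bigl(\bigcap_{i=1}^{L}\mc{A}_i^c\bigr)$. These identities follow from $\{T=l\}=\bigl(\bigcap_{i=1}^{l-1}\mc{A}_i^c\bigr)\setminus\bigl(\bigcap_{i=1}^{l}\mc{A}_i^c\bigr)$.

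The third step is an algebraic fusion of the $l=L$ summand with the $T=\infty$ contribution: the negative piece $-\mc{P}\bigl(\bigcap_{i=1}^{L}\mc{A}_i^c\bigr)\mc{P}\bigl(\bigcap_{i=1}^{L}\mc{B}_i\bigr)$ produced by the telescopic identity cancels against the positive $T=\infty$ term $+\mc{P}\bigl(\bigcap_{i=1}^{L}\mc{A}_i^c\bigr)\mc{P}\bigl(\bigcap_{i=1}^{L}\mc{B}_i\bigr)$, leaving exactly the residual $\mc{P}\bigl(\bigcap_{i=1}^{L-1}\mc{A}_i^c\bigr)\mc{P}\bigl(\bigcap_{i=1}^{L}\mc{B}_i\bigr)$ appearing in the statement, and reducing the running index of the sum to $j\in\{2,\ldots,L-1\}$. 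Taking the complement then yields $\mc{P}_{\sf{so}}$ in the announced form.

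I do not expect significant technical difficulty; the principal obstacle is conceptual rather than computational, namely the correct modelling of the secrecy-outage event. With split dummy-message rates the family $(\mc{B}_l)_l$ is not monotone---unlike the benchmark case $\textsf{R}_{2}=\cdots=\textsf{R}_{L}=0$ of \cite{TangLiuSpasojevicPoor09}, where $\mc{B}_l\subset\mc{B}_{l-1}$ collapses each intersection to its last element---so the cumulative intersection $\bigcap_{i=1}^{T}\mc{B}_i$ must be tracked throughout the partition and cannot be replaced by $\mc{B}_T$ alone. Once this point is identified, the remainder is a direct application of independence and telescoping.
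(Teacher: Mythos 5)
Your proposal is correct and follows essentially the same route as the paper: decompose over the number of transmissions actually performed, invoke independence of $(\mc{B}_l)_l$ from $(\mc{A}_l)_l$ to factor, and substitute the telescopic expressions for $\mc{P}(\textsf{L}=l)$. The only (harmless) bookkeeping difference is that you split the last cell into $\{T=L\}$ and $\{T=\infty\}$ and re-fuse them, whereas the paper's variable $\textsf{L}$ lumps them from the start via $\PP(\textsf{L}=L)=\mc{P}\bigl(\bigcap_{j=1}^{L-1}\mc{A}_j^c\bigr)$; your version also makes explicit the point the paper leaves implicit between its first two lines, namely that the secrecy requirement $\bigcap_{i=1}^{l}\mc{B}_i$ is imposed only over the transmissions that actually occur.
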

Proof of Prop. \ref{prop:SecrecyOutage} is stated in App. \ref{ref:ProofProp}. This formulation will be used for
discrete channels in Sec. \ref{sec:DiscreteChannels} and Gaussian channel in Sec. \ref{sec:RayleighGaussian}.
We denote by $ \textsf{L} \in \{1,\ldots,L\}$, the random number of transmissions that depends on channel states parameters $(k_1,\ldots, k_L)$ and rate parameters $(  \textsf{R}, \textcolor[rgb]{0.00,0.00,0.00}{\textsf{R}_{1}}, \ldots,  \textcolor[rgb]{0.00,0.00,0.00}{\textsf{R}_{L}})$. 
\begin{eqnarray}
\PP(\textsf{L} = 1 )  &=&  \mc{P} \bigg(   \mc{A}_1 \bigg),  \label{eq:ProbaTransmission1} \\
\PP(\textsf{L} = l )  &=&    \mc{P} \bigg(   \bigcap_{j = 1}^{l-1}  \mc{A}^c_j \cap \mc{A}_l \bigg)  , \qquad  \forall l \in \{2, \ldots, L-1\} \nonumber \\
&=&    \mc{P} \bigg(   \bigcap_{j = 1}^{l-1}  \mc{A}^c_j  \bigg) -  \mc{P}\bigg(  \bigcap_{j = 1}^{l}  \mc{A}^c_j   \bigg) ,  \label{eq:ProbaTransmission} \\
\PP(\textsf{L} = L )  &=&  \mc{P} \bigg(   \bigcap_{j = 1}^{L-1}  \mc{A}^c_j \bigg).  \label{eq:ProbaTransmissionL} 
\end{eqnarray}
The expected number of transmissions $\E\big[\textsf{L} \big]$ is given by:
\begin{eqnarray}
\E\big[\textsf{L} \big]  &=&  \sum_{l=1}^L  l \cdot   \PP(\textsf{L} = l )  = 1 +   \sum_{l=1}^{L-1} \mc{P}\bigg(  \bigcap_{j = 1}^{l}  \mc{A}^c_j   \bigg). \label{eq:ExpectedNbTransmission}
\end{eqnarray}

Since the number of transmissions $\textsf{L}$ is a random variable, the expected number of bits correctly decoded is given by the Renewal-Reward Theorem as in \cite{ZorziRao96}, \cite{CaireTuninettiHARQ2001}. 

\begin{definition}\label{def:SecrecyThroughput}
The secrecy throughput $\eta$ is defined as the expected number of bits correctly decoded by the legitimate decoder per channel use and can be obtained from the renewal-reward approach 
\begin{eqnarray}
\textcolor[rgb]{0.00,0.00,0.00}{\eta}&=&\max_{ \textsf{R}, \atop \textcolor[rgb]{0.00,0.00,0.00}{\textsf{R}_{1}}, \ldots \textcolor[rgb]{0.00,0.00,0.00}{\textsf{R}_{L}}, } \frac{\E[\textsf{R}]}{\E[\textsf{L}]}
= \max_{  \textsf{R}, \atop\textcolor[rgb]{0.00,0.00,0.00}{\textsf{R}_{1}}, \ldots \textcolor[rgb]{0.00,0.00,0.00}{\textsf{R}_{L}}, } \frac{\textsf{R} \cdot (1 - \mc{P}_{\sf{co}})}{ 1 +   \sum_{l=1}^{L-1} \mc{P}\big(  \bigcap_{j = 1}^{l}  \mc{A}^c_j   \big)  }, \nonumber \\ 
&\text{u.c.}&
\begin{cases}
\mc{P}_{\sf{co}} & \leq \xi_{\sf{c}}, \\
\mc{P}_{\sf{so}}  &\leq \xi_{\sf{s}}. \label{eq:ThroughputDefinition}\\ 
\end{cases}
\end{eqnarray}
\end{definition}
The maximum is taken over the parameters $ \textsf{R}, \textsf{R}_{1}, \ldots, \textsf{R}_{L}$, such that the connection outage probability and the secrecy outage probability are lower than $\xi_{\sf{c}}$ and $\xi_{\sf{s}}$,  which are the constraints defined according to  the requirements on the secrecy and reliability.



\subsection{Example: Discrete Channel States}\label{sec:DiscreteChannels}

To illustrate the definitions we introduced, we consider the  scenario represented by Fig. \ref{fig:IndependentChannelb}, in which the channel states of the legitimate decoder and of the eavesdropper are binary and uniformly distributed over $\{k^y,k'^y\}$ and $\{k^z,k'^z\}$. We define the operating point as $\xi_{\sf{c}}= 0.25$ and $ \xi_{\sf{s}}= 0.125$, and assume the maximum number of transmissions is $L=2$.
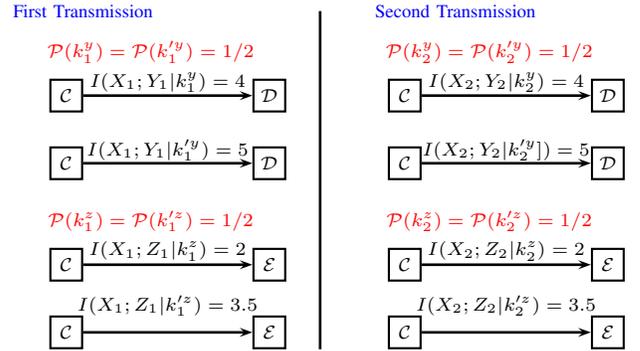
\begin{figure}[!h]
\begin{scriptsize}
\begin{center}
\psset{xunit=0.45cm,yunit=0.45cm}
\begin{pspicture}(0,-2.5)(16,8)
\rput[u](1,8){\textcolor[rgb]{0.0,0.0,1.0}{First Transmission}}
\psframe(0,0)(1,1)
\psframe(0,-2)(1,-1)
\psframe(6,0)(7,1)
\psframe(6,-2)(7,-1)
\psline[linewidth=1pt]{->}(1,0.5)(6,0.5)
\psline[linewidth=1pt]{->}(1,-1.5)(6,-1.5)
\rput[u](3.5,0.9){$I (X_1; Z_1|{k}^z_1) = 2$}
\rput[u](3.5,-0.7){$I (X_1; Z_1|{k}'^z_1) = 3.5$}
\rput(0.5,0.5){$\C$}
\rput(0.5,-1.5){$\C$}
\rput(6.5,0.5){$\mc{E}$}
\rput(6.5,-1.5){$\mc{E}$}
\rput(3,1.8){\textcolor[rgb]{1.0,0.0,0.0}{$\mc{P}(k^z_1) = \mc{P}(k'^z_1) = 1/2$}}
\psframe(0,5)(1,6)
\psframe(0,3)(1,4)
\psframe(6,5)(7,6)
\psframe(6,3)(7,4)
\psline[linewidth=1pt]{->}(1,5.5)(6,5.5)
\psline[linewidth=1pt]{->}(1,3.5)(6,3.5)
\rput[u](3.5,5.9){$I (X_1; Y_1|k^y_1) = 4$}
\rput[u](3.5,3.9){$I (X_1; Y_1|k'^y_1) = 5$}
\rput(0.5,5.5){$\C$}
\rput(0.5,3.5){$\C$}
\rput(6.5,5.5){$\D$}
\rput(6.5,3.5){$\D$}
\rput(3,6.8){\textcolor[rgb]{1.0,0.0,0.0}{$\mc{P}(k^y_1) = \mc{P}(k'^y_1) = 1/2$}}
\rput[u](12,8){\textcolor[rgb]{0.0,0.0,1.0}{Second Transmission}}
\psline[linewidth=1pt]{-}(8,-2)(8,8)
\psframe(10,0)(11,1)
\psframe(10,-2)(11,-1)
\psframe(16,0)(17,1)
\psframe(16,-2)(17,-1)
\psline[linewidth=1pt]{->}(11,0.5)(16,0.5)
\psline[linewidth=1pt]{->}(11,-1.5)(16,-1.5)
\rput[u](13.5,0.9){$I (X_2; Z_2|{k}^z_2) = 2$}
\rput[u](13.5,-0.7){$I (X_2; Z_2|{k}'^z_2) = 3.5$}
\rput(10.5,0.5){$\C$}
\rput(10.5,-1.5){$\C$}
\rput(16.5,0.5){$\mc{E}$}
\rput(16.5,-1.5){$\mc{E}$}
\rput(13,1.8){\textcolor[rgb]{1.0,0.0,0.0}{$\mc{P}(k^z_2) = \mc{P}(k'^z_2) = 1/2$}}
\psframe(10,5)(11,6)
\psframe(10,3)(11,4)
\psframe(16,5)(17,6)
\psframe(16,3)(17,4)
\psline[linewidth=1pt]{->}(11,5.5)(16,5.5)
\psline[linewidth=1pt]{->}(11,3.5)(16,3.5)
\rput[u](13.5,5.9){$I (X_2; Y_2|k^y_2) = 4$}
\rput[u](13.5,3.9){$I (X_2; Y_2|k'^y_{2}]) = 5$}
\rput(10.5,5.5){$\C$}
\rput(10.5,3.5){$\C$}
\rput(16.5,5.5){$\D$}
\rput(16.5,3.5){$\D$}
\rput(13,6.8){\textcolor[rgb]{1.0,0.0,0.0}{$\mc{P}(k^y_2) = \mc{P}(k'^y_2) = 1/2$}}
\end{pspicture}
\caption{In both transmissions, the capacity of the channel to the legitimate decoder takes two possible values $\{4 , 5 \}$ with probability $(1/2, 1/2)$ and the capacity of the channel to the eavesdropper takes two possible values $ \{2 , 3.5 \}$ with probability $(1/2, 1/2)$.}\label{fig:IndependentChannelb}
\end{center}
\end{scriptsize}
\end{figure}
We investigate the secrecy throughput of the ASR-code whose existence is stated in Theorem  \ref{theo:RandomLeszekCode} and we compare its performance to the protocols shown in \cite{TangLiuSpasojevicPoor09}  and in \cite{TomasinLaurenti14}.



$\bullet$ The secure HARQ protocol of \cite{TangLiuSpasojevicPoor09} is a particular case of the ASR-code in which the dummy-message rate $ \textsf{R}_{2} = 0$ is zero. As depicted on fig. \ref{fig:RateRegionsPoor}, after $L=2$ transmissions, the decoding is correct if:
\begin{eqnarray}
\textsf{R} + \textsf{R}_1 &\leq&  I(X_1;Y_1|k_1) +  I(X_2;Y_2|k_2),\label{eq:TangD}
\end{eqnarray}
and the transmission is secret if:
\begin{eqnarray}
\textsf{R}_1 &\geq&  I(X_1;Y_1|k_1) +  I(X_2;Y_2|k_2).\label{eq:TangS}
\end{eqnarray}
\begin{figure}[!ht]
\begin{center}
\psset{xunit=1cm,yunit=1cm}
\begin{pspicture}(-1,-1)(5,5)
\pspolygon*[linecolor=vertclair](0,3)(3,0)(5.5,0)(5.5,4.5)(0,4.5) 
\pspolygon*[linecolor=red](0,0)(0,1.5)(1.5,0)
\pspolygon[fillstyle=hlines,hatchcolor=blue,linecolor=white](3,0)(5.5,0)(5.5,4.85)(3,4.85)
\psline[linewidth=0.5pt]{->}(-1,0)(6,0)
\psline[linewidth=0.5pt]{->}(0,-1)(0,5)
\rput[d](5.5,-0.3){$I(X_1;Y_1|k_1)$}
\rput[d](5.5,-0.7){$I(X_1;Z_1|k_1)$}
\rput[r](-0.2,4.8){$I(X_2;Y_2|k_2)$}
\rput[r](-0.2,4.4){$I(X_2;Z_2|k_2)$}
\psline[linewidth=0.5pt,linestyle=dotted](0,3)(3,0)
\psdots(1.5,0)(3,0)(0,3)(0,1.5)
\rput[d](1.5,-0.3){$\textsf{R}_{1}$}
\rput[u](2,3.5){(D)}
\rput[u](0.5,0.5){(S)}
\rput[r](-0.2,1.5){$\textsf{R}_{1}$}
\rput[r](-0.2,3){$\textsf{R} + \textsf{R}_{1}$}
\psline[linewidth=0.5pt,linestyle=dotted](1.5,0)(0,1.5)
\psline[linewidth=1pt,linestyle=dashed,linecolor = blue](3,0)(3,5)
\rput[d](3,-0.3){ $\; \textsf{R} + \textsf{R}_{1}$}
\rput[l](3,5){ $\;\textcolor[rgb]{0.00,0.00,1.00}{\textsf{Ack}_1}$}
\rput[r](3,5){$ \textcolor[rgb]{0.00,0.00,1.00}{\textsf{Nack}_1} \;$ }
\end{pspicture}
\caption{Regions of correct decoding (D) and secret transmission (S) of \cite{TangLiuSpasojevicPoor09}, corresponding to equations \eqref{eq:TangD} and \eqref{eq:TangS}.}\label{fig:RateRegionsPoor}
\end{center}
\end{figure}
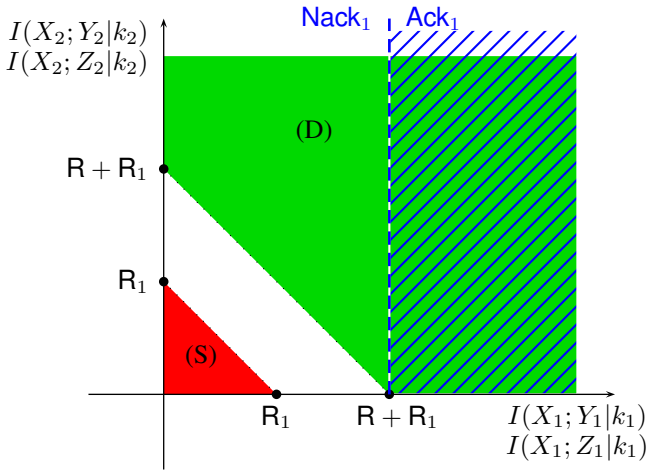

$\bullet$ The S-HARQ protocol of \cite[Sec. V]{TomasinLaurenti14} involves multiple dummy-message rates $(\textsf{R}_1 , \textsf{R}_2)$. As depicted on fig. \ref{fig:RateRegionsTomasin}, after $L=2$ transmissions, the decoding is correct if:
\begin{eqnarray}
\textsf{R} &\leq&  \max\Big(I(X_1;Y_1|k_1) -  \textsf{R}_1 , 0 \Big)  \nonumber \\
&+&   \max\Big(I(X_2;Y_2|k_2)  -  \textsf{R}_2 , 0\Big), \label{eq:TomasinD}
\end{eqnarray}
and the transmission is secret if:
\begin{eqnarray}
\textsf{R}_1 \geq  I(X_1;Y_1|k_1) ,\qquad
\textsf{R}_2 \geq   I(X_2;Y_2|k_2).\label{eq:TomasinS}
\end{eqnarray}

Fig. \ref{fig:RateRegions}, \ref{fig:RateRegionsPoor} and \ref{fig:RateRegionsTomasin} show that the decoding and the secrecy regions are different for the ASR-code and for the protocols of \cite{TangLiuSpasojevicPoor09} and \cite{TomasinLaurenti14}. 
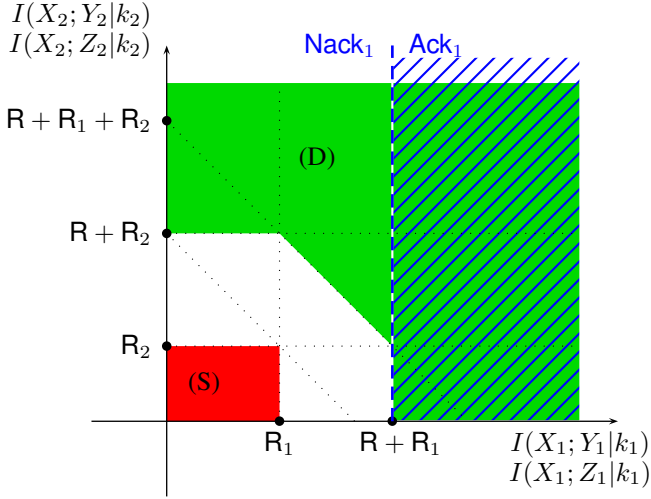
\begin{figure}[!ht]
\begin{center}
\psset{xunit=1cm,yunit=1cm}
\begin{pspicture}(-1,-1)(5,5.5)
\pspolygon*[linecolor=vertclair](0,2.5)(1.5,2.5)(3,1)(3,0)(5.5,0)(5.5,4.5)(0,4.5) 
\pspolygon*[linecolor=red](0,0)(0,1)(1.5,1)(1.5,0)
\pspolygon[fillstyle=hlines,hatchcolor=blue,linecolor=white](3,0)(5.5,0)(5.5,4.85)(3,4.85)
\psline[linewidth=0.5pt]{->}(-1,0)(6,0)
\psline[linewidth=0.5pt]{->}(0,-1)(0,5)
\rput[d](5.5,-0.3){$I(X_1;Y_1|k_1)$}
\rput[d](5.5,-0.7){$I(X_1;Z_1|k_1)$}
\rput[r](-0.2,5.4){$I(X_2;Y_2|k_2)$}
\rput[r](-0.2,5){$I(X_2;Z_2|k_2)$}
\psline[linewidth=0.5pt,linestyle=dotted](0,4)(4,0)
\psline[linewidth=0.5pt,linestyle=dotted](0,2.5)(5.5,2.5)
\psdots(1.5,0)(0,4)(0,1)(0,2.5)(3,0)
\rput[d](1.5,-0.3){$\textsf{R}_{1}$}
\rput[r](-0.2,1){$\textsf{R}_{2}$}
\rput[r](-0.2,2.5){$\textsf{R} + \textsf{R}_{2}$}
\rput[r](-0.2,4){$\textsf{R} + \textsf{R}_{1} + \textsf{R}_{2}$}
\psline[linewidth=0.5pt,linestyle=dotted](0,1)(5.5,1)
\psline[linewidth=0.5pt,linestyle=dotted](1.5,0)(1.5,4.5)
\psline[linewidth=0.5pt,linestyle=dotted](2.5,0)(0,2.5)
\psline[linewidth=1pt,linestyle=dashed,linecolor = blue](3,0)(3,5)
\rput[d](3,-0.3){ $\; \textsf{R} + \textsf{R}_{1}$}
\rput[l](3,5){ $\;\textcolor[rgb]{0.00,0.00,1.00}{\textsf{Ack}_1}$}
\rput[r](3,5){$ \textcolor[rgb]{0.00,0.00,1.00}{\textsf{Nack}_1} \;$ }
\rput[u](2,3.5){(D)}
\rput[u](0.5,0.5){(S)}
\end{pspicture}
\caption{Regions of correct decoding (D) and secret transmission (S) of \cite{TomasinLaurenti14}, corresponding to equations \eqref{eq:TomasinD} and \eqref{eq:TomasinS}.}
\label{fig:RateRegionsTomasin}
\end{center}
\end{figure}


\begin{figure}[ht!]
\begin{scriptsize}
\psfragscanon
\psfrag{labelx}[][]{Secrecy Rate: $\textsf{R}$}
\psfrag{labely}[][]{Secrecy Throughput: $\eta$}
\psfrag{datadatadatadatadata1}[l][l]{ASR-code}
\psfrag{datadatadatadatadata2}[l][l]{Protocol of \cite{TangLiuSpasojevicPoor09}}
\psfrag{datadatadatadatadata3}[l][l]{Protocol of  \cite{TomasinLaurenti14}}
\includegraphics[width=0.5\textwidth]{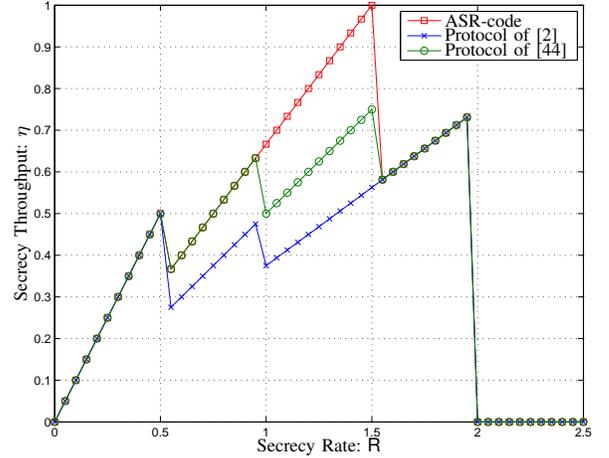}
\end{scriptsize}
\caption{Secrecy throughput for $L=2$ possible transmissions and the constraints $\xi_{\sf{c}}= 0.25$, $ \xi_{\sf{s}}= 0.125$.}\label{fig:DiscreteChannelsThroughput_2017_05_16}
\end{figure}

Fig. \ref{fig:DiscreteChannelsThroughput_2017_05_16} compares the secrecy throughput of ASR-code, and of the protocols of \cite{TangLiuSpasojevicPoor09} and \cite{TomasinLaurenti14}; we observe the following
\begin{itemize}
\item The ASR-code outperforms both protocols \cite{TangLiuSpasojevicPoor09} and  \cite{TomasinLaurenti14} for the secrecy rate $\textsf{R}= 1.5$, dummy message rates $(\textsf{R}_1, \textsf{R}_2 ) = (3.5,2)$ and outage probabilities $(\mc{P}_{\sf{co}},\mc{P}_{\sf{so}} )= (0,0.125)$. 
\item For the protocol of \cite{TangLiuSpasojevicPoor09}, the optimal dummy-message rate $\textsf{R}_1^{\text{\cite{TangLiuSpasojevicPoor09}}} = 7$ corresponding to $L=2$ transmissions is high and prevents the first transmission to be decoded correctly.
\item 
For  the protocol of \cite{TomasinLaurenti14}, the optimal second dummy-message rate $\textsf{R}_2^{\text{\cite{TomasinLaurenti14}}} = 3.5$ is higher than $\textsf{R}_2 =2$ for ASR-code. Hence, when the secrecy rate exceeds $\textsf{R}\geq 1.5$, the connection outage probability increases to $\mc{P}_{\sf{co}}=0.25$ and reduces the secrecy throughput.
\item
In the example we show, the ASR-code provides more than $33\%$ of increase compared to the protocols of \cite{TangLiuSpasojevicPoor09} and  \cite{TomasinLaurenti14}. However, we hasten to say that the improvement depends on the adopted distribution of the channel gains. In particular, if the values of the channels to the eavesdropper are replaced by $ \{2 , 3 \}$ (instead of  $ \{2 , 3.5 \}$), the protocol of \cite{TomasinLaurenti14} provides the same secrecy throughput $\eta = 1.333$ as the ASR-code, whereas the protocol of \cite{TangLiuSpasojevicPoor09} provides a lower secrecy throughput of $\eta = 1.125$. Therefore, while we are sure our approach outperforms \cite{TangLiuSpasojevicPoor09}, the direct comparison with \cite{TomasinLaurenti14} is not obvious, as also noted in  \cite[pp.1714]{TomasinLaurenti14}. Despite this cautionary statement, we did not find any example where the throughput of \cite{TomasinLaurenti14} is larger than the one offered by the ASR-code we propose.
\end{itemize}

 \section{Rayleigh Block-Fading Gaussian Wiretap Channels}\label{sec:RayleighGaussian}
 
   \subsection{Channel Model}\label{sec:ChannelModel}

We consider the Gaussian wiretap channel with Rayleigh block-fading represented in Fig.~\ref{fig:GaussianStateDepParallelWiretapsState} and defined as
\begin{eqnarray}
Y = h_{\textsf{d}} \cdot X + N_{\textsf{d}}, \qquad 
Z = h_{\textsf{e}} \cdot X + N_{\textsf{e}}.\label{eq:GaussianChannel2}
\end{eqnarray}
where $N_{\textsf{d}}$ and $N_{\textsf{e}}$ are i.i.d. zero-mean, unit-variance Gaussian variables. In this work, we consider the canonical model of independent block-fading channels. The applicability of the ASR-code in the case of correlated fading \cite{HarsiniLahoutiLevoratoZorzi11,KimChoiBanSung11,ChaitanyaLarsson14,ShiDingMaTam15} requires further study but goes beyond the scope of this work which focuses on the code design problem.

We assume a normalized power constraint on the channel input $\E\big[ |X|^2 \big] \leq P = 1$.
\begin{figure}[!ht]
\begin{center}
\psset{xunit=1cm,yunit=1cm}
\begin{pspicture}(-0.75,-1.3)(7.5,1.5)
\psline[linewidth=1pt]{->}(-1,0.5)(0,0.5)
\rput[u](-0.5,0.7){$M$}
\psframe(0,0)(1,1)
\rput[u](0.5,0.5){$\mc{C}$}
\psframe(6,0)(7,1)
\rput[u](6.5,0.5){$\mc{D}$}
\psline[linewidth=1pt]{->}(7,0.5)(8,0.5)
\rput[u](7.5,0.75){$\hat{M}$}
\psframe(6,-1.5)(7,-0.5)
\rput[u](6.5,-1){$\mc{E}$}
\rput[u](1.5,0.8){$X$}
\rput[u](5.5,0.8){$Y$}
\rput[u](5.5,-0.7){$Z$}
\psline[linewidth=1pt]{->}(1,0.5)(2.75,0.5)
\psline[linewidth=1pt]{->}(3.25,0.5)(4.25,0.5)
\psline[linewidth=1pt]{->}(4.25,0.5)(6,0.5)
\pscircle(3,0.5){0.25}
\pscircle(4.5,0.5){0.25}
\psline[linewidth=1pt](4.5,0.25)(4.5,0.75)
\psline[linewidth=1pt](2.823,0.676)(3.176,0.324)
\psline[linewidth=1pt](2.823,0.324)(3.176,0.676 )
\psline[linewidth=1pt]{->}(2,0.5)(2,-1)(2.75,-1)
\psline[linewidth=1pt]{->}(3.25,-1)(4.25,-1)
\psline[linewidth=1pt]{->}(4.25,-1)(6,-1)
\pscircle(3,-1){0.25}
\pscircle(4.5,-1){0.25}
\psline[linewidth=1pt](4.5,-1.25)(4.5,-0.75)
\psline[linewidth=1pt](2.823,-1.176)(3.176,- 0.823)
\psline[linewidth=1pt](2.823,- 0.823)(3.176,-1.176 )
\psline[linewidth=1pt]{->}(3,1.2)(3,0.75)
\psline[linewidth=1pt]{->}(4.5,1.2)(4.5,0.75)
\rput[u](3,1.35){$h_{\textsf{d}}$}
\rput[u](4.5,1.35){$N_{\textsf{d}}$}
\psline[linewidth=1pt]{->}(3,-0.3)(3,-0.75)
\psline[linewidth=1pt]{->}(4.5,-0.3)(4.5,-0.75)
\rput[u](3,-0.15){$h_{\textsf{e}}$}
\rput[u](4.5,-0.15){$N_{\textsf{e}}$}
\end{pspicture}
\caption{Gaussian wiretap channel with  Rayleigh block-fading $(h_{\textsf{d}}, h_{\textsf{e}})$. }
\label{fig:GaussianStateDepParallelWiretapsState}
\end{center}
\end{figure}
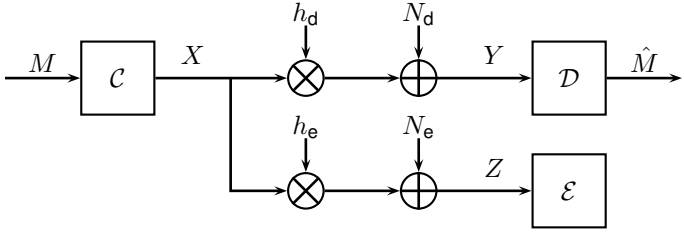
The state parameters $k = (h_{\textsf{d}}, h_{\textsf{e}}) \in \mc{K}$ are fading coefficients, distributed i.i.d. from one block to another with Rayleigh probability distribution. 
Since the mean of noise and power are normalized to 1, we introduce the notation $\textsf{SNR}_{\textsf{d}} = |h_{\textsf{d}}|^2$ and $\textsf{SNR}_{\textsf{e}} = |h_{\textsf{e}}|^2$. The mean $\textsf{SNR}$s  are denoted by $\gamma_{\textsf{d}} = \E[\textsf{SNR}_{\textsf{d}} ] = \E\big[|h_{\textsf{d}}|^2 \big]$ and $\gamma_{\textsf{e}} = \E[\textsf{SNR}_{\textsf{e}} ] = \E\big[|h_{\textsf{e}}|^2 \big]$. For $x\geq 0$, the probability density function $f(x)$ and the cumulative distribution function $F(x)$ of the  \textsf{SNR}s are defined by
   \begin{eqnarray}
 f(x) =  \frac{1}{\gamma}\cdot e^{-\frac{x}{\gamma}}, \qquad F(x) =    1 -  e^{-\frac{x}{\gamma}} . \label{eq:CDF}
\end{eqnarray}
so the mutual informations write as
\begin{eqnarray}
I(X;Y|h_{\textsf{d}}) &=& \log( 1 + \textsf{SNR}_{\textsf{d}}),\label{eq:MutualInfo1}\\
I(X;Z|h_{\textsf{e}}) &=& \log( 1 + \textsf{SNR}_{\textsf{e}}) . \label{eq:MutualInfo2}
\end{eqnarray}
and depend on the random fading coefficients $k = (h_{\textsf{d}}, h_{\textsf{e}}) \in \mc{K}$.
   
The constraints $\xi_{\sf{c}}$ and $\xi_{\sf{s}}$ are not always compatible since the outage constraints $\mc{P}_{\sf{co}} \leq\xi_{\sf{c}} $ and $\mc{P}_{\sf{so}} \leq\xi_{\sf{s}} $ may not be satisfied simultaneously. We characterize the trade-off between connection outage probability and secrecy outage probability when only one transmission is allowed, i.e., $L=1$.

\begin{figure}[ht!]
\begin{scriptsize}
\psfragscanon
\psfrag{labelx}[][l]{Transmission Rate: $\textsf{R}$}
\psfrag{labely}[][l]{Secrecy Throughput}
\psfrag{aa}[l][B]{$\gamma_d = 15 [dB]$}
\psfrag{bb}[l][B]{$\gamma_e = 5 [dB]$}
\psfrag{cc}[l][B]{$\textsf{L}=1$}
\psfrag{datadatadata1}[l][l]{ $\xi_s = 10^{-2}$}
\psfrag{datadatadata2}[l][l]{ $\xi_s = 10^{-4}$}
\psfrag{datadatadata3}[l][l]{ $\xi_s = 10^{-6}$}
\psfrag{max2}[l][B]{maximal $\textsf{R}$ for $\xi_c = 0.75$ and $\xi_s = 10^{-2}$}
\psfrag{max4}[l][B]{maximal $\textsf{R}$ for $\xi_c = 0.75$ and $\xi_s = 10^{-4}$}
\psfrag{max6}[l][B]{maximal $\textsf{R}$ for $\xi_c = 0.75$ and $\xi_s = 10^{-6}$}
\psfrag{opt2}[l][B]{Optimal $\textsf{R}$ for $\xi_s = 10^{-2}$}
\psfrag{opt4}[l][B]{Optimal $\textsf{R}$ for $\xi_s = 10^{-4}$}
\psfrag{opt6}[l][B]{Optimal $\textsf{R}$ for $\xi_s = 10^{-6}$}
\includegraphics[width=0.5\textwidth]{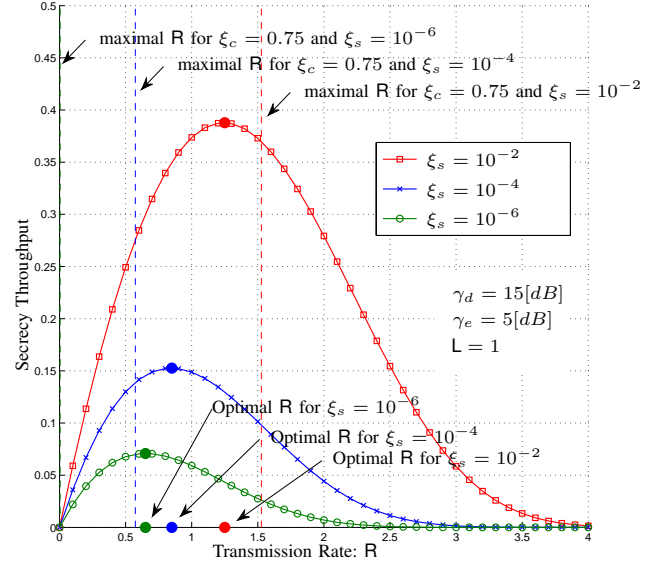}
\end{scriptsize}
\vspace{-0.9cm}
\caption{Secrecy throughput depending on the secrecy rate  $\textsf{R}\geq0$, for different secrecy constraints $\xi_{\sf{s}} \in \big\{10^{-2},10^{-4},10^{-6}\big\}$ and a single $L=1$ transmission. Vertical dashed lines represents the maximal secrecy rate $\textsf{R}$ corresponding to the constraint $\xi_{\sf{c}} = 0.75$.}\label{fig:Throughput1T}
\end{figure}

\begin{theorem}\label{theo:FeasibleOutOneTrans} Consider the case of $L=1$ transmission.\\ 
$\bullet$ The constraints $\xi_{\sf{c}}$ and $\xi_{\sf{s}}$ are compatible 
if and only if
\begin{eqnarray}
 \xi_{\sf{s}} \geq  \bigg(1 - \xi_{\sf{c}}\bigg)^{\frac{\gamma_{\textsf{d}}}{ \gamma_{\textsf{e}}}} \quad \Longleftrightarrow \quad  \bigg(\xi_{\sf{s}}\bigg)^{\gamma_{\textsf{e}}} -  \bigg(1 - \xi_{\sf{c}}\bigg)^{\gamma_{\textsf{d}}}\geq 0 .\label{eq:theoTradeOff1}
\end{eqnarray}
$\bullet$ For a fixed secrecy rate $\textsf{R}\geq 0$, the contraints $\xi_{\sf{c}}$ and $\xi_{\sf{s}}$ are compatible if and only if
\begin{eqnarray}
\textcolor[rgb]{0,0,0}{\textsf{R}} & \leq&  \log_2 \bigg( \frac{1 -  \gamma_{\textsf{d}} \cdot \ln(1 - \xi_{\sf{c}})}{1 - \gamma_{\textsf{e}} \cdot \ln(\xi_{\sf{s}})}\bigg)   . \label{eq:theoTradeOff2}
\end{eqnarray}
\end{theorem}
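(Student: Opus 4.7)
My starting point is to specialize the outage events of Definition~\ref{def:OutageEvents} to $L=1$ and to the Rayleigh statistics of Section~\ref{sec:ChannelModel}. With a single transmission and a Gaussian input we have $\mc{A}_1^c=\{\log_2(1+\textsf{SNR}_{\textsf{d}})<\textsf{R}+\textsf{R}_1\}$ and $\mc{B}_1^c=\{\log_2(1+\textsf{SNR}_{\textsf{e}})>\textsf{R}_1\}$, so the exponential CDF in \eqref{eq:CDF} yields closed forms
\begin{align*}
\mc{P}_{\sf{co}} &= 1-\exp\!\Bigl(-\tfrac{2^{\textsf{R}+\textsf{R}_1}-1}{\gamma_{\textsf{d}}}\Bigr),\\
\mc{P}_{\sf{so}} &= \exp\!\Bigl(-\tfrac{2^{\textsf{R}_1}-1}{\gamma_{\textsf{e}}}\Bigr).
\end{align*}
These two quantities depend on the rate pair $(\textsf{R},\textsf{R}_1)$ through the single quantity $\textsf{R}_1$ and $\textsf{R}+\textsf{R}_1$, respectively, and are monotonic, which is what makes the analysis clean.

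The next step is to rewrite $\mc{P}_{\sf{co}}\leq\xi_{\sf{c}}$ and $\mc{P}_{\sf{so}}\leq\xi_{\sf{s}}$ as a two-sided bracket on $2^{\textsf{R}_1}$. Taking logarithms (and using that $\ln(1-\xi_{\sf{c}})\leq 0$ and $\ln(\xi_{\sf{s}})\leq 0$) gives
\begin{equation*}
1-\gamma_{\textsf{e}}\ln(\xi_{\sf{s}})\;\leq\;2^{\textsf{R}_1}\;\leq\;\frac{1-\gamma_{\textsf{d}}\ln(1-\xi_{\sf{c}})}{2^{\textsf{R}}}.
\end{equation*}
Both bounds are at least $1$, so they are automatically consistent with $\textsf{R}_1\geq 0$. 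The existence of an admissible $\textsf{R}_1$ is therefore equivalent to the lower bound being no larger than the upper bound, which after rearranging yields exactly \eqref{eq:theoTradeOff2}; this handles the second bullet of the theorem.

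For the first bullet I would simply observe that compatibility of $(\xi_{\sf{c}},\xi_{\sf{s}})$ over all choices of $\textsf{R}\geq 0$ is obtained by minimizing the right-hand side requirement, i.e.\ by letting $\textsf{R}=0$ in \eqref{eq:theoTradeOff2}. This reduces the inequality to $1-\gamma_{\textsf{e}}\ln(\xi_{\sf{s}})\leq 1-\gamma_{\textsf{d}}\ln(1-\xi_{\sf{c}})$, i.e.\ $\gamma_{\textsf{e}}\ln(\xi_{\sf{s}})\geq \gamma_{\textsf{d}}\ln(1-\xi_{\sf{c}})$. Exponentiating gives $\xi_{\sf{s}}^{\gamma_{\textsf{e}}}\geq (1-\xi_{\sf{c}})^{\gamma_{\textsf{d}}}$, and taking the $1/\gamma_{\textsf{e}}$-th root yields the first form in \eqref{eq:theoTradeOff1}, the two forms being trivially equivalent.

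The argument is essentially a sequence of equivalent rewrites, so there is no real obstacle. The one mild point requiring care is to check the monotonicity/sign conventions when moving logarithms across inequalities (since $\ln(1-\xi_{\sf{c}})$ and $\ln(\xi_{\sf{s}})$ are both negative), and to confirm that the inner bracket on $2^{\textsf{R}_1}$ is compatible with the nonnegativity constraint $\textsf{R}_1\geq 0$; once these are verified, both statements of the theorem follow directly.
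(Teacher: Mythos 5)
Your proposal is correct and follows essentially the same route as the paper's own proof: it derives the closed-form Rayleigh outage expressions, brackets $2^{\textsf{R}_1}$ between the secrecy lower bound and the connection upper bound, and reads off compatibility as nonemptiness of that interval. The only (immaterial) difference is ordering—you establish the general-$\textsf{R}$ condition \eqref{eq:theoTradeOff2} first and obtain \eqref{eq:theoTradeOff1} as the $\textsf{R}=0$ specialization, whereas the paper treats the $\textsf{R}=0$ case first.
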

The proof of Theorem \ref{theo:FeasibleOutOneTrans} is stated in App. \ref{sec:FeasibleOutOneTrans}. Equation \eqref{eq:theoTradeOff1} emphasizes that the trade-off between the connection and the secrecy outage probability only depends on the ratio $\gamma_{\textsf{d}} / \gamma_{\textsf{e}}$, \ie  the difference $\gamma_{\textsf{d}} - \gamma_{\textsf{e}}$ in [dB]. Fig. \ref{fig:Throughput1T} represents the secrecy throughput for $L=1$ transmission depending on the rate parameter $\textsf{R}$, for different constraints $(\xi_{\sf{c}},\xi_{\sf{s}})$. The shape of the curve  depends on the secrecy outage constraint $\mc{P}_{\sf{so}} \leq \xi_{\sf{s}}$. The connection outage constraint $\mc{P}_{\sf{co}} \leq \xi_{\sf{c}}$ truncates the secrecy throughput at the dashed lines.


\begin{figure}[ht!]
\begin{scriptsize}
\psfragscanon
\psfrag{aa}[l][B]{$\gamma_d = 15 [dB]$}
\psfrag{bb}[l][B]{$\gamma_e = 5 [dB]$}
\psfrag{cc}[l][B]{$\textsf{R}=0$}
\psfrag{l1}[rb][]{$\textsf{L}=1$ transmissions}
\psfrag{l2}[rb][]{$\textsf{L}=2$ transmissions}
\psfrag{l4}[rb][]{$\textsf{L}=4$ transmissions}
\psfrag{l8}[rb][]{$\textsf{L}=8$ transmissions}
\psfrag{labelx}[][]{Connection Outage Probability: $\mathcal{P}_{\textsf{co}}$}
\psfrag{labely}[][]{Secrecy Outage Probability: $\mathcal{P}_{\textsf{so}}$}
\psfrag{datadatadatadatadatadatadata1}[l][l]{{Trade-off for $\textsf{L}=1$ transmission}}
\psfrag{datadatadatadatadatadatadata2}[l][l]{{Trade-off for ASR-code}}
\psfrag{datadatadatadatadatadatadata3}[l][l]{{Trade-off for the protocol of \cite{TangLiuSpasojevicPoor09}}}
\includegraphics[width=0.5\textwidth]{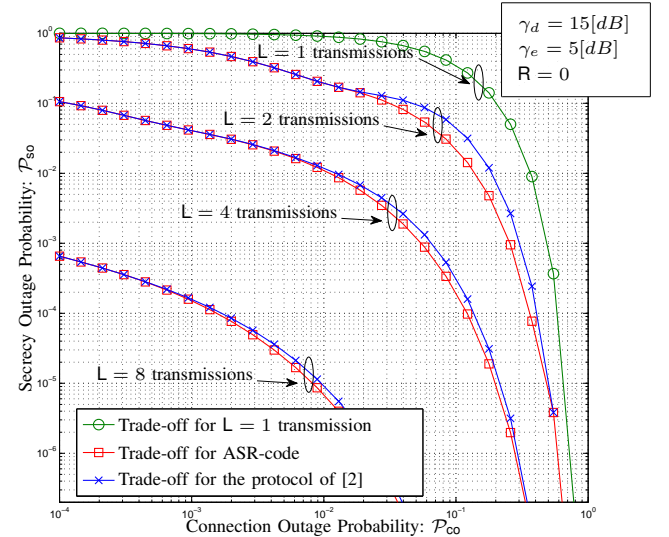}
\end{scriptsize}
\vspace{-0.9cm}
\caption{Trade-off between the connection $\mc{P}_{\sf{co}}$ and secrecy $\mc{P}_{\sf{so}}$ outage probability, for zero rate $\textsf{R} =0$ and number of transmissions $L\in \{1,2,4,8\}$. }\label{fig:PlotTradeoffOutagePsoPco_2016_04_15}
\end{figure}


  \subsection{Multiple Transmissions}\label{sec:OutageMultipleTrans}
  
We propose a numerical optimization of the secrecy throughput with respect to the rate parameters for the case of $L\geq2 $ multiple transmissions. 

Since our objective is to demonstrate that the  ASR-code outperforms the HARQ-code of \cite{TangLiuSpasojevicPoor09}, we show a simple example where the dummy-message rate parameters $\textsf{R}_2 = \textsf{R}_3 = \ldots = \textsf{R}_{L}$ are equal after the second transmission. This makes the presentation easier and avoids the tedious optimization which depends only on three parameters: $(\textsf{R},\textsf{R}_1,\textsf{R}_2)$. 

For Rayleigh fading channels the protocol of \cite{TangLiuSpasojevicPoor09} outperforms the one proposed in \cite{TomasinLaurenti14} as can be seen in \cite[Fig. 6, Fig. 7, Fig. 8]{TomasinLaurenti14}. Thus, we only need to compare the performance of the ASR-code we proposed with the protocol of \cite{TangLiuSpasojevicPoor09}. The main difference is that the latter uses two parameters $(\textsf{R},\textsf{R}_1)$, while the ASR-code uses three rates $(\textsf{R},\textsf{R}_1,\textsf{R}_2)$.

    \begin{figure}[ht!]
        \begin{center}
   \begin{scriptsize}
\psfragscanon
\psfrag{aa}[l][B]{$\gamma_d = 15 [dB]$}
\psfrag{bb}[l][B]{$\gamma_e = 5 [dB]$}
\psfrag{cc}[l][B]{$\textsf{L}=8$}
\psfrag{s1}[r][]{$\xi_s = 10^{-2}$}
\psfrag{c1}[r][]{$\xi_c = 1 $}
\psfrag{s2}[r][]{$\xi_s = 10^{-4}$}
\psfrag{c2}[r][]{$\xi_c = 1 $}
\psfrag{s3}[r][]{$\xi_s = 10^{-6}$}
\psfrag{c3}[r][]{$\xi_c = 1 $}
\psfrag{s4}[r][]{$\xi_s = 10^{-2}$}
\psfrag{c4}[r][]{$\xi_c = 10^{-2} $}
\psfrag{labelx}[][]{Secrecy Rate: $\textsf{R}$}
\psfrag{labely}[][]{Secrecy Throughput}
\psfrag{datadatadatadatadata1}[l][l]{{Performance of the ASR-code}}
\psfrag{datadatadatadatadata2}[l][l]{{Protocol of \cite{TangLiuSpasojevicPoor09}}}
\includegraphics[width=0.5\textwidth]{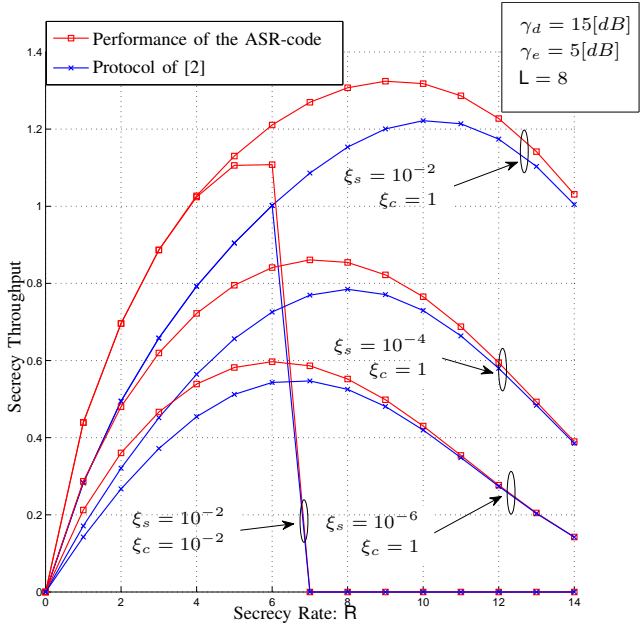}
\end{scriptsize}
\vspace{-0.9cm}
\caption{Secrecy throughput depending on the secrecy rate $\textsf{R}$, under different pairs of constraints $(\xi_{\sf{c}},\xi_{\sf{s}}) \in \big\{ (1, 10^{-2}) , (1, 10^{-4}) , (1, 10^{-6}) \big\}$. For each setting, the ASR-code outperforms the protocol of \cite{TangLiuSpasojevicPoor09}.}\label{fig:MultiPlotVaryingR0_2016_05_17}
    \end{center}
\end{figure}


\textbf{Trade-off between connection and secrecy outage probabilities.}\\ 
As mentioned in Sec \ref{sec:ChannelModel}, the constraints $\xi_{\sf{c}}$ and $\xi_{\sf{s}}$ are not always compatible. Fig. \ref{fig:PlotTradeoffOutagePsoPco_2016_04_15} represents the trade-off between the connection $\mc{P}_{\sf{co}}$ and the secrecy $\mc{P}_{\sf{so}}$ outages, depending on the maximal number of transmissions $L\in \{1,2,4,8\}$, for $\textsf{R} =0$. For each setting, the trade-off for the protocol of \cite{TangLiuSpasojevicPoor09} is more restrictive than for the ASR-code. Splitting the dummy-message rate over multiple transmissions, \ie with $\textsf{R}_2>0$, provides a small improvement for this trade-off. For a given pair of constraints $(\xi_{\sf{c}},\xi_{\sf{s}})$, there exists a minimal number of transmissions $L$ such that the connection and secrecy outage probabilities $\mc{P}_{\sf{co}}\leq \xi_{\sf{c}}$ and $\mc{P}_{\sf{so}}\leq \xi_{\sf{s}}$ satisfy the constraints.

\textbf{Range of dummy-message rate} $\textsf{R}_1\in [ \textsf{R}_1^{\min} ,  \textsf{R}_1^{\max} ]$.\\  
The minimal rate $\textsf{R}_1$ should guarantee that during the first transmission, the equation $\mc{P}\big(I(X_1;Z_1|k_1) \geq \textsf{R}_1\big)= \xi_{\sf{s}}$ is satisfied with equality. If the inequality was strict $\mc{P}\big(I(X_1;Z_1|k_1) \geq \textsf{R}_1\big)< \xi_{\sf{s}}$, then it would be possible to decrease the  rate parameter $\textsf{R}_1$ in order to increase the secrecy rate $\textsf{R}$ and the corresponding throughput. The minimal rate $\textsf{R}_1^{\min}\leq \textsf{R}_1 $ is defined by:
\begin{eqnarray}
 \textsf{R}_1^{\min} &=& \log_2\Big(1 - \gamma_{\textsf{e}} \cdot \log_2(\xi_{\sf{s}}) \Big) .
\end{eqnarray} 

The maximal  rate $\textsf{R}_1$ should guarantee that the secrecy outage probability for $L$ possible  transmissions, is equal to $\xi_{\sf{s}}$. A larger dummy-message rate $\textsf{R}_1$ would be a waste of transmission resources. 
This induces a maximal rate $\textsf{R}_1^{\max} \geq \textsf{R}_1$, defined by:
\begin{eqnarray}
 \textsf{R}_1^{\max} &\text{ s.t. }& \mc{P}\bigg(  \sum_{j\in 1}^L  I(X_j;Z_j|k_j) \geq \textsf{R}_1^{\max}  \bigg)= \xi_{\sf{s}} .
\end{eqnarray} 
The dummy-message rate $ \textsf{R}_1^{\max}$ is optimal for the protocol of \cite{TangLiuSpasojevicPoor09}, \ie where second rate $\textsf{R}_2=0$ is zero. 

  \begin{figure}[ht!]
  \begin{scriptsize}
\psfragscanon
\psfrag{aa}[l][B]{$\gamma_d = 15 [dB]$}
\psfrag{bb}[l][B]{$\gamma_e = 5 [dB]$}
\psfrag{cc}[l][B]{$\textsf{L}=8$}
\psfrag{r1}[r][]{$\textsf{R}_1^{\max}$ {optimal for \cite{TangLiuSpasojevicPoor09}}}
\psfrag{r2}[r][]{$\textsf{R}_1^{\star}$ optimal {for the ASR-code}}
\psfrag{r3}[r][]{$\textsf{R}_2^{\star}$ optimal {for the ASR-code}}
\psfrag{labelx}[][]{Secrecy Rate: $\textsf{R}$}
\psfrag{labely}[][]{Dummy-Message Rates}
\psfrag{datadatadatadata1}[l][l]{$\xi_s=10^{-6}$, $\xi_c=1$}
\psfrag{datadatadatadata2}[l][l]{$\xi_s=10^{-4}$, $\xi_c=1$}
\psfrag{datadatadatadata3}[l][l]{$\xi_s=10^{-2}$, $\xi_c=1$}
\includegraphics[width=0.5\textwidth]{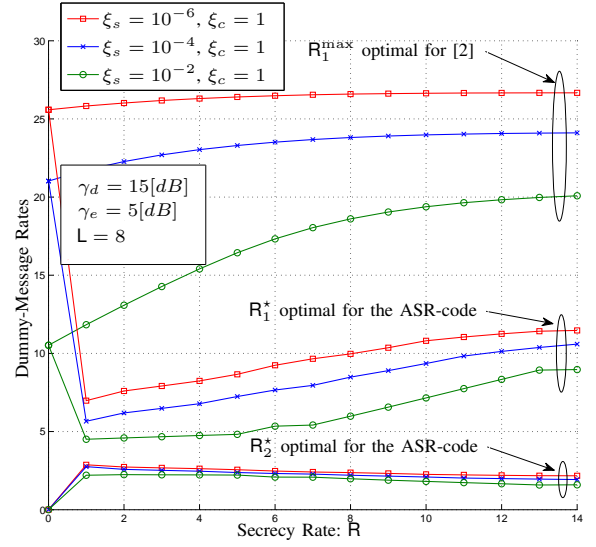}
\end{scriptsize}
\vspace{-0.9cm}
\caption{Optimal rates $(\textsf{R}_1^{\star},\textsf{R}_2^{\star})$ for the ASR-code and $\textsf{R}_1^{\max}$ for the protocol of \cite{TangLiuSpasojevicPoor09}, depending on the secrecy rate $\textsf{R}$ under different pairs of constraints $(\xi_{\sf{c}},\xi_{\sf{s}}) \in \big\{ (1, 10^{-2}) , (1, 10^{-4}) , (1, 10^{-6}) \big\}$.}\label{fig:MultiPlotVaryingR0B_2016_05_17}
\end{figure}

\textbf{Optimization of dummy-message rates} $(\textsf{R}_1,\textsf{R}_2)$.\\
We fixe the secrecy rate $\textsf{R}\geq0$ and for each rate $\textsf{R}_1^{\min} \leq \textsf{R}_1 \leq \textsf{R}_1^{\max}$, we find $\textsf{R}_2^{\star}(\textsf{R}_1)$ such that the secrecy outage constraint $\mc{P}_{\sf{so}} = \xi_{\sf{s}}$ is satisfied with equality.  Then, we optimize the secrecy throughput regarding the pair of rates $\big(\textsf{R}_1,\textsf{R}_2^{\star}(\textsf{R}_1)\big)$ and the secrecy rate $\textsf{R}$.

   \begin{figure}[ht!]
   \begin{scriptsize}
\psfragscanon
\psfrag{aa}[l][B]{$\gamma_d = 15 [dB]$}
\psfrag{bb}[l][B]{$\gamma_e = 5 [dB]$}
\psfrag{cc}[l][B]{$\textsf{L}=8$}
\psfrag{r1}[l][]{$\xi_s=10^{-2}$, $\xi_c=1$}
\psfrag{r2}[l][]{$\xi_s=10^{-4}$, $\xi_c=1$}
\psfrag{r3}[l][]{$\xi_s=10^{-6}$, $\xi_c=1$}
\psfrag{labelx}[][]{Secrecy Rate: $\textsf{R}$}
\psfrag{labely}[][]{Outage Probability}
\psfrag{datadatadatadatadatadatadatadatadatadatadata1}[l][l]{Connection Outage Probability {for the ASR-code}}
\psfrag{datadatadatadatadatadatadatadatadatadatadata2}[l][l]{Connection Outage Probability {for the protocol of \cite{TangLiuSpasojevicPoor09}}}
\psfrag{datadatadatadatadatadatadatadatadatadatadata3}[l][l]{Secrecy Outage Probability {for both ASR-code and \cite{TangLiuSpasojevicPoor09}}}
\includegraphics[width=0.5\textwidth]{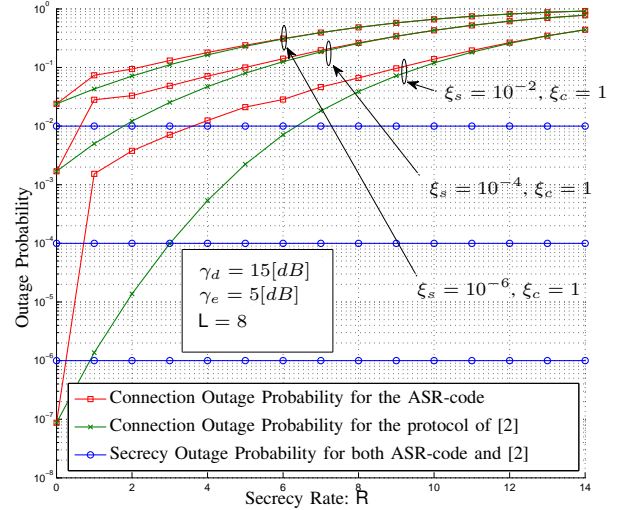}
\end{scriptsize}
\vspace{-0.9cm}
\caption{connection and secrecy outage probabilities for the ASR-code and for the protocol of \cite{TangLiuSpasojevicPoor09}, depending on the secrecy rate $\textsf{R}$ under different pairs of constraints $(\xi_{\sf{c}},\xi_{\sf{s}}) \in \big\{ (1, 10^{-2}) , (1, 10^{-4}) , (1, 10^{-6}) \big\}$.}\label{fig:MultiPlotVaryingR0C_2016_05_17}
\end{figure}


\begin{figure*}[ht!]
\begin{center}
\begin{tabular}{||r|c|c|c|c||}
\hline
\hline
Constraints $(\xi_{\sf{c}},\xi_{\sf{s}})$ &  $(1, 10^{-6})$&  $(1, 10^{-4})$ & $(1, 10^{-2})$ & $(10^{-2}, 10^{-2})$ \\
\hline\hline
Maximal secrecy throughput $\eta$ with $\textsf{R}_1^{\max}$,  $\textsf{R}_2=0$&  $0.55$ & $0.78$ &$1.22$ &  $1.00$\\
\hline
Maximal secrecy throughput  $\eta$ with $(\textsf{R}_1^{\star},\textsf{R}_2^{\star})$&  $0.60$  &  $0.86$ &    $1.32$  &     $1.11$\\  
\hline
Increase of secrecy throughput &  $9\%$ &   $10\%$ & $8\%$ & $11\%$ \\ 
\hline \hline
$\E\big[\textsf{L}\big]$  with $\textsf{R}_1^{\max}$, $\textsf{R}_2=0$ & $7.76$ & $7.57$ &   $7.20$ &   $5.94$ \\
\hline
$\E\big[\textsf{L}\big]$ with $(\textsf{R}_1^{\star},\textsf{R}_2^{\star})$ &  $6.92$ &    $6.53$ &    $6.14$ &    $5.36$\\
\hline
Reduction of exp. number of transmissions & $-10\%$ & $-14\%$ &  $-15\%$ &  $ -10\%$  \\
\hline\hline
\end{tabular}
\end{center}
\caption{Maximal secrecy throughput $\eta$ corresponding to Fig.  \ref{fig:MultiPlotVaryingR0_2016_05_17} and expected number of transmissions $\E\big[\textsf{L}\big]$. For each pair of outage parameters $(\xi_{\sf{c}},\xi_{\sf{s}})$, the ASR-code provides a higher secrecy throughput and a lower expected number of transmissions, compared to the protocol of \cite{TangLiuSpasojevicPoor09}. }\label{fig:tabular}
\end{figure*}

\textbf{Numerical Results}\\
Figure \ref{fig:MultiPlotVaryingR0_2016_05_17} compares the secrecy throughput for the ASR-code and for the protocol of \cite{TangLiuSpasojevicPoor09}. These two curves are drawn depending the secrecy rate $\textsf{R}\geq 0$, by considering four  pairs of constraints:\\ $(\xi_{\sf{c}},\xi_{\sf{s}}) \in \bigg\{ (1, 10^{-2}) , (1, 10^{-4}) , (1, 10^{-6}) , (10^{-2}, 10^{-2})\bigg\}$.
\begin{itemize}
\item[$\bullet$] As mentioned in Fig. \ref{fig:tabular}, splitting the dummy-message rate using $(\textsf{R}_1,\textsf{R}_2)$ improves the secrecy throughput by more than 8\%, compared to the approach of \cite{TangLiuSpasojevicPoor09}, with only one parameter $\textsf{R}_1^{\max}$, \ie with $\textsf{R}_2=0$. 
\item[$\bullet$] Tightening the secrecy constraint $\xi_{\sf{s}}$, reduces the secrecy throughput.
\item[$\bullet$] As mentioned for one transmission in Sec. \ref{sec:ChannelModel}, the connection outage constraint $\xi_{\sf{c}}$ induces a truncation of the secrecy throughput. This is illustrated by the curves corresponding to: $(\xi_{\sf{c}},\xi_{\sf{s}}) \in \big\{ (1, 10^{-2}) ,  (10^{-2}, 10^{-2})\big\}$.
\item[$\bullet$] The optimal rates $(\textsf{R}_1^{\star},\textsf{R}_2^{\star})$ for the ASR-code are presented in Fig.  \ref{fig:MultiPlotVaryingR0B_2016_05_17}.  As expected, the first parameter  $\textsf{R}_1^{\star} < \textsf{R}_1^{\max}$ is lower for the ASR-code than for the protocol of \cite{TangLiuSpasojevicPoor09}. Therefore, the first transmissions are more likely to be decoded correctly and this increases the secrecy throughput.
\item[$\bullet$] The connection outage probability $\mc{P}_{\sf{co}}$ corresponding to the optimal parameters $(\textsf{R},\textsf{R}_1^{\star},\textsf{R}_2^{\star})$ of the ASR-code are presented in Fig.  \ref{fig:MultiPlotVaryingR0C_2016_05_17}. For $(\xi_{\sf{c}},\xi_{\sf{s}})=(1, 10^{-2})$, the secrecy rate $\textsf{R}=6$ induces a connection outage probability close to $\mc{P}_{\sf{co}}\simeq 10^{-2}$ that corresponds to the truncation of the secrecy throughput for $\textsf{R}\geq6$, on Fig. \ref{fig:MultiPlotVaryingR0_2016_05_17}. The connection outage probability is larger for the ASR-code than for the protocol of \cite{TangLiuSpasojevicPoor09} because the total dummy-message rate $\textsf{R}_1 + (L-1)\cdot \textsf{R}_2 > \textsf{R}_1^{\max}$ is greater. However, this does not prevent the secrecy throughput of the ASR-code to be greater than for the protocol of \cite{TangLiuSpasojevicPoor09}.
\item[$\bullet$] The expected number of transmissions $\E\big[\textsf{L}\big]$ is represented in Fig.  \ref{fig:MultiPlotVaryingR0D_2016_05_17}. In Fig. \ref{fig:tabular}, the secrecy throughput and the expected number of transmissions $\E\big[\textsf{L}\big]$ are provided for the constraints: $(\xi_{\sf{c}},\xi_{\sf{s}}) \in \big\{ (1, 10^{-2}) , (1, 10^{-4}) , (1, 10^{-6}) , (10^{-2}, 10^{-2})\big\}$.
\item[$\bullet$] Compared to the protocol of  \cite{TangLiuSpasojevicPoor09}, the ASR-code increases the secrecy throughput $\eta$ by more than $8\%$ and reduces the expected number of transmissions $\E\big[\textsf{L}\big]$ by more than $10\%$.
\end{itemize}

  \begin{figure}[ht!]
  \begin{scriptsize}
\psfragscanon
\psfrag{aa}[l][B]{$\gamma_d = 15 [dB]$}
\psfrag{bb}[l][B]{$\gamma_e = 5 [dB]$}
\psfrag{cc}[l][B]{$\textsf{L}=8$}
\psfrag{labelx}[][]{Secrecy Rate: $\textsf{R}$}
\psfrag{labely}[][]{Expected number of transmission: $\E[\textsf{L}]$}
\psfrag{datadatadatadatadatadatadatadatadata1}[l][l]{$\xi_s = 10^{-6}$, $\xi_c = 1$ {for the protocol of \cite{TangLiuSpasojevicPoor09}}}
\psfrag{datadatadatadatadatadatadatadatadata2}[l][l]{$\xi_s = 10^{-6}$, $\xi_c = 1$ {for the ASR-code}}
\psfrag{datadatadatadatadatadatadatadatadata3}[l][l]{$\xi_s = 10^{-4}$, $\xi_c = 1$ {for the protocol of \cite{TangLiuSpasojevicPoor09}}}
\psfrag{datadatadatadatadatadatadatadatadata4}[l][l]{$\xi_s = 10^{-4}$, $\xi_c = 1$ {for the ASR-code}}
\psfrag{datadatadatadatadatadatadatadatadata5}[l][l]{$\xi_s = 10^{-2}$, $\xi_c = 1$ {for the protocol of \cite{TangLiuSpasojevicPoor09}}}
\psfrag{datadatadatadatadatadatadatadatadata6}[l][l]{$\xi_s = 10^{-2}$, $\xi_c = 1$ {for the ASR-code}}
\includegraphics[width=0.5\textwidth]{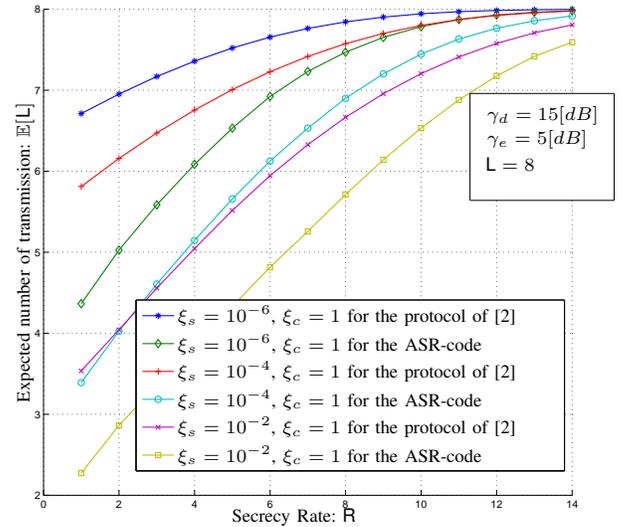}
\end{scriptsize}
\vspace{-0.9cm}
\caption{Expected number of transmissions $\E\big[\textsf{L}\big]$ for the ASR-code and for the protocol of \cite{TangLiuSpasojevicPoor09}, depending on the secrecy rate $\textsf{R}$ with different pairs of constraints $(\xi_{\sf{c}},\xi_{\sf{s}}) \in \big\{ (1, 10^{-2}) , (1, 10^{-4}) , (1, 10^{-6}) \big\}$.}\label{fig:MultiPlotVaryingR0D_2016_05_17}
\end{figure}

\section{Conclusion}\label{sec:conclusion}

We investigate secure HARQ protocols for state-dependent channels where the encoder only knows the statistics of the channel state. In this case, the reliability and security are defined in a probabilistic sense and there is a trade-off between the constraints we can impose on these two criteria. 

The presence of multiple transmissions rounds in HARQ offers new dimensions which we exploit in the design of the code to ensure secrecy and reliability. This was done in the literature, using a unique dummy-message. Our work follows this idea but, unlike previous works, we design a new code tailored for HARQ protocol, by splitting the dummy-message rate over several rate parameters.  These additional degrees of freedom improve the matching between the dummy-message rates and the realization of the eavesdropper channel. We evaluate the performance in terms of secrecy outage probability,  connection outage probability and secrecy throughput. For Rayleigh fading channel, the splitting of the dummy-message rate provides higher secrecy throughput and lower expected duration/average delay.

\appendices

\section{Proof of Theorem \ref{theo:RandomLeszekCode}}\label{sec:proofTheoLz}

We prove the Theorem  \ref{theo:RandomLeszekCode} considering $L=2$ transmissions. We provide a coding scheme that is reliable and secure for all pair of channel states $(k_1,k_2)$ that satisfy equation \eqref{eq:ChannelStates}.
\begin{eqnarray}
(k_1,k_2) \in \mc{S}^c_1(\varepsilon,\textsf{R},\textsf{R}_{1} ,\PP_{\sf{x}}^{\star}) \cap  \mc{S}_2(\varepsilon,\textsf{R},\textsf{R}_{1},\textsf{R}_{2},\PP_{\sf{x}}^{\star}).\label{eq:ChannelStates}
\end{eqnarray}
The first transmission is not reliable, the encoder receives a $\sf{NACK}_1$ feedback and starts a second transmission. More precisely, the channel states $(k_1,k_2)$ satisfy equations \eqref{eq:channelstates3Lz}, \eqref{eq:channelstates4Lz}, \eqref{eq:channelstates5Lz}, \eqref{eq:channelstates6Lz}.
\begin{eqnarray}
\textsf{R} + \textsf{R}_{1} + \textsf{R}_{2} &\leq&    I(X_1;Y_1|k_1)  +   I(X_2;Y_2|k_2) -  8 \varepsilon, \label{eq:channelstates3Lz} \\
\textsf{R} + \textsf{R}_{1} &>&    I(X_1;Y_1|k_1) -  4 \varepsilon, \label{eq:channelstates4Lz} \\
 \textsf{R}_{1} + \textsf{R}_{2}  &\geq&   I(X_1;Z_1|k_1)  +   I(X_2;Z_2|k_2)-  4  \varepsilon, \label{eq:channelstates5Lz} \\
 \textsf{R}_{1} &\geq&   I(X_1;Z_1|k_1)-  4 \varepsilon. \label{eq:channelstates6Lz}
 \end{eqnarray}
 Equations \eqref{eq:channelstates3Lz}, \eqref{eq:channelstates5Lz}, \eqref{eq:channelstates6Lz} correspond to the definition of the set of channel states $ \mc{S}_2(\varepsilon,\textsf{R},\textsf{R}_{1},\textsf{R}_{2},\PP_{\sf{x}}^{\star})$ and equation \eqref{eq:channelstates4Lz}  corresponds to the $\sf{NACK}_1$ feedback, \ie the first transmission failed $k_1 \notin \mc{S}^c_1(\varepsilon,\textsf{R},\textsf{R}_{1} ,\PP_{\sf{x}}^{\star}) $. Combining  \eqref{eq:channelstates3Lz} and \eqref{eq:channelstates4Lz}, it induces equation \eqref{eq:channelstates7Lz} that will be used in the following.
\begin{eqnarray}
 \textsf{R}_{2} &\leq&      I(X_2;Y_2|k_2) - 4 \varepsilon. \label{eq:channelstates7Lz}
\end{eqnarray}
Fig. \ref{fig:RateRegions} shows that equation \eqref{eq:channelstates7Lz} is a direct consequence of equation \eqref{eq:channelstates3Lz}, since the second transmission starts only when there is a $\textsf{Nack}_1$ feedback. Let  the length of the first transmission bloc $\bar{n}  \in \N$ be larger than $(n_1,n_2,n_3,n_4,n_5, n_6, n_7, n_8, n_9)$ given by equations (\ref{eq:ParametreN1Lz}), (\ref{eq:ParametreN2Lz}), (\ref{eq:ParametreN2Lzb}), (\ref{eq:ParametreN2Lzc}), (\ref{eq:Proof:ErrorE3Lz}), (\ref{eq:ParametreN4Lz}), (\ref{eq:ParametreN4Lzb}), (\ref{eq:ParametreN4Lzc}) and (\ref{eq:FifthTermebisLz}). We prove that there exists a HARQ-code ${c}^{\star} \in \mc{C}(n,\textsf{R} ,L)$ with stochastic encoder such that the error probability and the information leakage rate   satisfy equation \eqref{eq:Final4Lz}, for all channel states $(k_1,k_2) \in \mc{S}^c_1(\varepsilon,\textsf{R},\textsf{R}_{1} ,\PP_{\sf{x}}^{\star}) \cap  \mc{S}_2(\varepsilon,\textsf{R},\textsf{R}_{1},\textsf{R}_{2},\PP_{\sf{x}}^{\star})$,
\begin{eqnarray}
\PP_{\textsf{e}}\bigg({c}^{\star} \bigg|k_1,k_2 \bigg)  \leq  \varepsilon',\quad \mc{L}_{\textsf{e}}\bigg({c}^{\star} \bigg|k_1,k_2\bigg) \leq \varepsilon' , \label{eq:Final4Lz}
\end{eqnarray}
with $ \varepsilon' =  \varepsilon  \cdot ( 13 +   20   \log_2 |\mc{X} | )$.
 
Using similar arguments, the HARQ-code with stochastic encoder ${c}^{\star}  \in \mc{C}(n,\textsf{R},L)$  can be extended to the case of $L$ transmissions. The coding scheme is reliable and secure for all channel states $(k_1,\ldots , k_L) \in \bigcup_{l=1}^L \mc{S}_l(\varepsilon,\textsf{R},\textsf{R}_{1},\ldots,\textsf{R}_{L},\PP_{\sf{x}}^{\star})$ stated in definition \ref{def:ChannelStatesLz}.

\begin{figure}[!ht]
\begin{center}
\begin{footnotesize}
\psset{xunit=0.5cm,yunit=0.5cm}
\begin{pspicture}(-4,-1.7)(7,18)
\psellipse(2,16)(3,0.7)
\psellipse(2,14)(3,0.7)
\psdots(2,12.8)(2,12.5)(2,12.2)
\psellipse(2,11)(3,0.7)
\psdots(2,16)
\rput[u](4.5,17.1){$X_1^n(m,w_1)\sim  \PP_{\sf{x}}^{\star \times n } $}
\psline[linewidth=1pt]{-}(-2,16)(2,16)
\psline[linewidth=1pt]{-}(2,15)(2,16)
\rput[u](-2.4,16){$m$}
\rput[u](2.4,15){$w_1$}
\rput[u](-3.7,13.5){$|\mc{M} |= 2^{n \textsf{R}} $}
\psbrace[linecolor=black,ref=rC](-1,17)(-1,10){}
\rput[u](-2.5,18){$|\mc{M} \times \mc{M}_{1}| = 2^{n ( \textsf{R} + \textsf{R}_{1} )} $}
\rput[u](8,16){$| \mc{M}_{1} | = 2^{n \textsf{R}_{1} } $}
\put(3,8){\vector(-2,0){1}}
\psline[linewidth=2pt](-6,9)(10,9)
\psellipse(2,6)(3,0.7)
\pscircle(2,6){0.24}
\pscircle(1,6){0.24}
\pscircle(4,6){0.24}
\pscircle(0,6){0.24}
\psdots(2.7,6)(3,6)(3.3,6)
\psellipse(2,4)(3,0.7)
\psdots(2,2.8)(2,2.5)(2,2.2)
\psellipse(2,1)(3,0.7)
\pscircle(2,1){0.24}
\pscircle(1,1){0.24}
\pscircle(4,1){0.24}
\pscircle(0,1){0.24}
\psdots(2.7,1)(3,1)(3.3,1)
\psdots(2,6)
\rput[u](4.5,7.1){$X_2^{ n}(m,w_1,w_2) \sim  \PP_{\sf{x}}^{\star   { n } }$}
\psline[linewidth=1pt]{-}(-2,6)(2,6)
\psline[linewidth=1pt]{-}(2,5)(2,6)
\psline[linewidth=1pt]{-}(1,5)(2,6)
\rput[u](-2.4,6){$m$}
\rput[u](2.4,5){$w_1$}
\rput[u](0.6,5){$w_2$}
\rput[u](-3.7,3.5){$| \mc{M} | = 2^{ n\textsf{R}} $}
\psbrace[linecolor=black,ref=rC](-1,7)(-1,0){}
\rput[u](-2.5,8){$ | \mc{M} \times \mc{M}_{1} \times \mc{M}_{2} | = 2^{ n ( \textsf{R} + \textsf{R}_{1} + \textsf{R}_{2} )} $}
\rput[u](7.5,5){$ |\mc{M}_{1} \times \mc{M}_{2}| = 2^{ n ( \textsf{R}_{1} + \textsf{R}_{2} )} $}
\put(2,2.5){\vector(-1,1){0.35}}
\rput[u](2.5,-1.5){$ |\mc{M}_{2}| = 2^{ n  \textsf{R}_{2}} $}
\psbrace[linecolor=black,ref=rC](-0.5,0.3)(4.5,0.3){}
\rput[u](8,1){$ |\mc{M}_{1}| = 2^{ n  \textsf{R}_{1} } $}
\put(3,0.5){\vector(-1,0){1}}
\end{pspicture}
\end{footnotesize}
\caption{Multilevel random coding scheme $C  \in\mc{C}(n,\textsf{R},L)$ stated in section \ref{sec:RandomCodeLz} for $L=2$ transmissions. The parameters $n\in \N$, $\in \R^+$, $\textsf{R}\in \R^+$, $\textsf{R}_{1} \in \R^+$, $\textsf{R}_{2} \in \R^+$ determine the cardinalities of the set of messages $ |\mc{M}| = 2^{n \textsf{R}} $, the cardinality of the bins 
$ |\mc{M}_{1} | = 2^{n \textsf{R}_{1}}$ and the number of sub-bins $|\mc{M}_{2} | = 2^{n \textsf{R}_{2}} $. The random codewords $X_1^n(m,w_1)$ and $X_2^{ n}(m,w_1,w_2)   $ are generated with i.i.d. probability distribution $\PP_{\sf{x}}^{\star \times n } $. }\label{figure:BinningLeszekCode}
\end{center}
\end{figure}



 \subsection{Random HARQ-Code}\label{sec:RandomCodeLz}
 
We consider a random HARQ-code $C \in \mc{C}(n,\textsf{R},L)$ with stochastic encoder, represented by figure \ref{figure:BinningLeszekCode} for $L=2$ transmissions and defined as follows:
\begin{itemize}
\item[$\bullet$] \textit{Random codebook for the first transmission.} Generate $|\mc{M} \times \mc{M}_{1}|= 2^{n ( \textsf{R} + \textsf{R}_{1} )} $ sequences $X_1^n \in \mc{X} $ drawn from the probability distribution $\PP_{\sf{x}}^{\star \times n}$. Randomly bin them into $|\mc{M} |= 2^{n \textsf{R}} $ bins denoted by  $m\in \mc{M}$, each of them containing  $|\mc{M}_{1} |= 2^{n \textsf{R}_{1} } $  sequences $X_1^n \in \mc{X}^n$ indexed by the parameter $w_1 \in \mc{M}_{1}$. 
\item[$\bullet$] \textit{Encoding for the first transmission.} The encoder observes the realization of the message $m \in \mc{M}$. It chooses at random the parameter $w_1 \in \mc{M}_{1}$ using the uniform probability distribution and sends the sequence of channel inputs $x_1^{n}(m,w_1)$ through the channel $\mc{T}_1$. 
\item[$\bullet$] \textit{First feedback from the decoder.} The decoder observes the realization of the channel state $k_1 \in \mc{K}_1$ and sends to the encoder the feedback "$\textsf{Ack}_1$" if it can decode the message after the first transmission (i.e. equation (\ref{eq:channelstates4Lz}) is not satisfied).  It sends the feedback "$\textsf{Nack}_1$" if it can not decode after the first transmission (i.e. equation (\ref{eq:channelstates4Lz}) is satisfied).
\item[$\bullet$] \textit{Decoding fonction for "$\textsf{Ack}_1$".} The decoder observes the state parameter $k_1\in \mc{K}_1 $ and finds the pair of indices $(m,w_1) \in \mc{M} \times \mc{M}_{1} $ such that $x_1^n(m,w_1) \in A_{\varepsilon}^{{\star}{n}}(y_1^n|k_1)$ is jointly typical with the sequence of channel outputs. Its returns the index $m\in \mc{M}$ of the transmitted message.
\item[$\bullet$] \textit{Random codebook for the second transmission.} Generate $|\mc{M} \times \mc{M}_{1} \times \mc{M}_{2} |= 2^{ n ( \textsf{R} + \textsf{R}_{1} + \textsf{R}_{2} )} $ sequences $X_2^{ n} \in \mc{X}^{ n}  $ drawn from the  probability distribution $\PP_{\sf{x}}^{\star \times  n}$. Randomly bin them into $|  \mc{M}|= 2^{ n  \textsf{R}} $ bins denoted by  $m\in \mc{M}$, each of them containing  $| \mc{M}_{1} \times \mc{M}_{2} |= 2^{ n ( \textsf{R}_{1} + \textsf{R}_{2} )}  $  sequences $X_2^{ n} \in \mc{X}^{ n}  $ indexed by a pair of parameters $(w_1,w_2) \in \mc{M}_{1} \times \mc{M}_{2}$. Each bin $m\in \mc{M}$ is divided into $|\mc{M}_{2} |= 2^{ n  \textsf{R}_{2}} $ sub-bins containing $| \mc{M}_{1} |= 2^{ n  \textsf{R}_{1} } $ sequences $X_2^{ n} \in \mc{X}^{ n}  $. We denote by $w_2\in \mc{M}_{2}$ the index of the sub-bins and by $w_1\in \mc{M}_{1}$ the index of the sequence of symboles $X_2^{ n}(m,w_1,w_2) \in \mc{X}^{ n}  $.
\item[$\bullet$] \textit{Encoding for the second transmission.} If the encoder receives a "$\textsf{Nack}_1$"  feedback, the second transmission starts.
Encoder chooses at random the parameter $w_2 \in \mc{M}_{2}$ using the uniform probability distribution and sends  the sequence of channel inputs $x_2^{ n}(m,w_1,w_2)$.
\item[$\bullet$] \textit{Second feedback from the decoder.} The decoder observes the realization of the channel state $(k_1,k_2)\in \mc{K}_1 \times \mc{K}_2$ and sends to the encoder the feedback "$\textsf{Ack}_2$" if it can decode   (i.e. equation (\ref{eq:channelstates3Lz}) is satisfied). It sends the feedback "$\textsf{Nack}_2$" if it can not decode (i.e. equation (\ref{eq:channelstates3Lz}) is satisfied).
\item[$\bullet$] \textit{Decoding function for "$\textsf{Ack}_2$".} The decoder observes the state parameters $(k_1,k_2)\in \mc{K}_1 \times \mc{K}_2$ and finds the triple of indices $(m,w_1,w_2) \in \mc{M} \times \mc{M}_{1} \times \mc{M}_{2}$ such that $x_1^n(m,w_1) \in A_{\varepsilon}^{{\star}{n}}(y_1^n|k_1)$ is jointly typical with the sequence of outputs of the first channel $\mc{T}_1$ and such that $x_2^{ n}(m,w_1,w_2) \in A_{\varepsilon}^{{\star}{ n}}(y_2^{ n}|k_2)$ is jointly typical with the sequence of outputs of the second channel $\mc{T}_2$. Its returns the index $m\in \mc{M}$ of the transmitted message.
\item[$\bullet$] \textit{Larger number of transmissions $L>2$.} The same procedure involving random codebook, encoding, feedbacks and decoding  is repeated for $L>2$ transmissions.
\item[$\bullet$] \textit{An error} is declared when the sequences $(x_1^n, y_1^n,z_1^n)\notin A_{\varepsilon}^{{\star}{n}}(\QQ_1|k_1)$ or $(x_2^{ n},y_2^{ n}, z_2^{ n})\notin A_{\varepsilon}^{{\star}{{ n}}}(\QQ_2|k_2)$ are not jointly typical for the probability distributions $\QQ_1 = \PP_{\sf{x}}^{\star} \times \mc{T}_1 \in \Delta(\mc{X} \times \mc{Y}_1 \times \mc{Z}_1 )$ and $\QQ_2 = \PP_{\sf{x}}^{\star} \times \mc{T}_2 \in \Delta(\mc{X} \times \mc{Y}_2 \times \mc{Z}_2 )$.
\end{itemize}


\begin{remark}
The parameter $n\in \N$ is the length of the transmission block, $ |\mc{M}|= 2^{n \textsf{R}}$ is the cardinality of the set of messages $\mc{M}$, $|\mc{M}_{1}|= 2^{n \textsf{R}_{1}} $ is the cardinality of the set of dummy-messages $\mc{M}_{1}$ for the first transmission and $ |\mc{M}_{2}|= 2^{n \textsf{R}_{2}} $ is the cardinality of the set of dummy-messages $\mc{M}_{2}$ for the second transmission. We denote by $\PP_{\sf{x}}^{\star} \in \Delta(\mc{X})$  the probability distribution of the sequences of channel inputs.
\end{remark}

\subsection{Expected error probability} \label{sec:Proof:EspErrorProb}

We upper bound the expected error probability for fixed messages $(m,w_1,w_2)$ and channel states $(k_1,k_2) \in  \mc{S}^c_1(\varepsilon,\textsf{R},\textsf{R}_{1} ,\PP_{\sf{x}}^{\star}) \cap  \mc{S}_2(\varepsilon,\textsf{R},\textsf{R}_{1},\textsf{R}_{2},\PP_{\sf{x}}^{\star})$.
\begin{eqnarray}
&&\E_c\bigg[\PP\bigg( \bigg\{(x_1^n, y_1^n,z_1^n)\notin A_{\varepsilon}^{{\star}{n}}(\QQ_1|k_1) \bigg\} \nonumber\\
&&\qquad  \cup\bigg\{ (x_2^{ n},y_2^{ n}, z_2^{ n})\notin A_{\varepsilon}^{{\star}{{ n}}}(\QQ_2|k_2)\bigg\}
\bigg)\bigg] \leq   \varepsilon,\label{eq:ParametreN1Lz}\\
&&\E_c\bigg[\PP\bigg(\bigg\{\exists (m', w_1', w_2') \neq (m, w_1,w_2),\text{ s.t. } \nonumber\\
&& \qquad  \{ x_1^n(m', w_1')\in   A_{\varepsilon}^{{\star}{n}}(y_1^n|k_1) \}\nonumber\\
&&  \qquad \cap\{ x_2^{ n}(m', w_1',w_2')\in  A_{\varepsilon}^{{\star}{{ n}}}(y_2^{ n}|k_2)\}\bigg\}\bigg)\bigg] \leq   \varepsilon,\qquad \label{eq:ParametreN2Lz}\\
&&\E_c\bigg[\PP\bigg(\bigg\{\exists (m', w_1') \neq (m,w_1),\text{ s.t. }\nonumber\\
&& \qquad\{ x_1^n(m', w_1')\in A_{\varepsilon}^{{\star}{n}}(y_1^n|k_1) \}\nonumber\\
&& \qquad\cap\{ x_2^{ n}(m', w_1',w_2)\in  A_{\varepsilon}^{{\star}{{ n}}}(y_2^{ n}|k_2)\}\bigg\}\bigg)\bigg] \leq \varepsilon, \label{eq:ParametreN2Lzb}\\
&&\E_c\bigg[\PP\bigg(\bigg\{\exists  w_2' \neq w_2,\text{ s.t. } \nonumber\\
&& \qquad\qquad  x_2^{ n}(m, w_1,w_2')\in  A_{\varepsilon}^{{\star}{{ n}}}(y_2^{ n}|k_2)\bigg\}\bigg)\bigg] \leq \varepsilon. \label{eq:ParametreN2Lzc}
\end{eqnarray}
\eqref{eq:ParametreN1Lz} comes from the typical sequences \cite[pp. 26]{ElGammalKim(book)11}.\\
\eqref{eq:ParametreN2Lz} comes from \eqref{eq:channelstates3Lz} and \cite[pp. 46, Packing Lemma]{ElGammalKim(book)11} since the codewords $(X _1^n(m', w_1'),X _2^{ n}(m',  w_1', w_2'))$ are independent of $(X _1^n(m,w_1), X _2^{ n}(m, w_1,w_2))$.\\
\eqref{eq:ParametreN2Lzb} comes from \eqref{eq:channelstates3Lz} and \cite[pp. 46, Packing Lemma]{ElGammalKim(book)11}.\\
\eqref{eq:ParametreN2Lzc} comes from \eqref{eq:channelstates7Lz} and \cite[pp. 46, Packing Lemma]{ElGammalKim(book)11}.

This provides an upper bound on:
\begin{eqnarray}
 \E_c\bigg[\PP_{\textsf{e}}\bigg(C\bigg|k_1,k_2 \bigg) \bigg] &\leq& 4 \varepsilon.
\end{eqnarray}

\subsection{Expected information leakage rate}\label{sec:ProofEspTauxEquivoqueLz}
We provide an upper bound on the expected information leakage rate that is valid for all channel states $(k_1,k_2) \in  \mc{S}^c_1(\varepsilon,\textsf{R},\textsf{R}_{1} ,\PP_{\sf{x}}^{\star}) \cap  \mc{S}_2(\varepsilon,\textsf{R},\textsf{R}_{1},\textsf{R}_{2},\PP_{\sf{x}}^{\star})$. To this purpose, we introduce four auxiliary random variables  $V_1$, $J_1$, $V_2$ and $J_2$ that belong to the sets $\mc{M}_{\sf{V}_1}$, $\mc{M}_{\sf{J}_1}$, $\mc{M}_{\sf{V}_2}$ and $\mc{M}_{\sf{J}_2}$
 with cardinality $ |\mc{M}_{\sf{V}_1}| = 2^{n \textsf{R}_{\sf{V}_1}}$, $ |\mc{M}_{\sf{J}_1}| = 2^{n \textsf{R}_{\sf{J}_1}}$, $ |\mc{M}_{\sf{V}_2}| = 2^{n \textsf{R}_{\sf{V}_2}}$ and $|\mc{M}_{\sf{J}_2}| = 2^{n \textsf{R}_{\sf{J}_2}}$ given by:
 \begin{eqnarray}
\textsf{R}_{\sf{V}_1} &=&  I(X_1; Z_1 |k_1)  +  I(X_2 ; Z_2 |k_2)  \nonumber\\
&-& \min\bigg( I(X_2 ; Z_2 |k_2) , \textsf{R}_{2} \bigg)-  4 \varepsilon,\label{eq:ExpectedInfoLeakage1}\\
\textsf{R}_{\sf{V}_2} &=&   \min\bigg( I(X_2 ; Z_2 |k_2) , \textsf{R}_{2} \bigg) -  4 \varepsilon,\label{eq:ExpectedInfoLeakage3}\\
\textsf{R}_{\sf{J}_1} &=&  \textsf{R}_{1}  - \textsf{R}_{\sf{V}_1}\nonumber \\
&=& \min\bigg(\textsf{R}_{1}  - I(X_1; Z_1 |k_1) +  4 \varepsilon , \textsf{R}_{1} + \textsf{R}_{2}  \nonumber\\
&-& I(X_1; Z_1 |k_1) -  I(X_2 ; Z_2 |k_2) +  4 \varepsilon\bigg) ,\label{eq:ExpectedInfoLeakage2}\\
\textsf{R}_{\sf{J}_2} &=&  \textsf{R}_{2} -  \textsf{R}_{\sf{V}_2} \nonumber \\
&=& \max\bigg(\textsf{R}_{2} -   I(X_2 ; Z_2 |k_2) , 0 \bigg) +  4 \varepsilon .\label{eq:ExpectedInfoLeakage4}
\end{eqnarray}
The idea of this proof is to adapt the size of the set of dummy-messages to the realizations of the mutual informations $I(X_1; Z_1 |k_1)$ and $I(X_2 ; Z_2 |k_2)$. The parameters $\textsf{R}_{\sf{V}_1}$, $\textsf{R}_{\sf{V}_2}$ and $\textsf{R}_{\sf{J}_2}$ are positive. Equations (\ref{eq:channelstates5Lz}) and (\ref{eq:channelstates6Lz}) guarantees that parameter $\textsf{R}_{\sf{J}_1} $ is positive for all channel states $(k_1,k_2) \in  \mc{S}^c_1(\varepsilon,\textsf{R},\textsf{R}_{1} ,\PP_{\sf{x}}^{\star}) \cap  \mc{S}_2(\varepsilon,\textsf{R},\textsf{R}_{1},\textsf{R}_{2},\PP_{\sf{x}}^{\star})$. 
In this section, each bin ${m}\in \mc{M}$ is re-organized as follows:
\begin{itemize}
\item[$\bullet$] First, we divide each sub-bin $w_2\in \mc{M}_{2}$ of size $2^{n\textsf{R}_{1} }$ into $2^{n\textsf{R}_{\sf{J}_1} }$ sub-sub-bins of size $2^{n\textsf{R}_{\sf{V}_1}}$. 
\item[$\bullet$] Second, we concatenate  the sub-bins $w_2\in \mc{M}_{2}$ into $2^{n\textsf{R}_{\sf{J}_2} }$ super-sub-bins containing $2^{n\textsf{R}_{\sf{V}_2} }$ sub-bins $w_2\in \mc{M}_{2}$. 
\end{itemize}
This analysis does not modify the random code $C $ but it allows to provide an upper bound over the information leakage rate. The parameters $W_1$ and $W_2$ correspond to the pairs of auxiliary random variables  $W_1= (V_1,J_1)$ and $W_2= (V_2,J_2)$. 
\begin{eqnarray}
&&n \cdot \E_c\bigg[\mc{L}_{\textsf{e}}\bigg(C \bigg|k_1,k_2\bigg) \bigg]\nonumber \\
&=& I(M , W_1 , W_2 ; Z_1^n, Z_2^{ n} |C,k_1,k_2)  \label{eq:term1}\\
&-&  H( W_1 , W_2 | M, C,k_1,k_2)  \label{eq:term2}\\
&+& H( W_1 , W_2 | M, C, Z_1^n, Z_2^{ n},k_1,k_2) \label{eq:term3}.
\end{eqnarray}

$\bullet$ The first term \eqref{eq:term1} satisfies:
\begin{eqnarray}
&& I(M , W_1 , W_2, C ; Z_1^n, Z_2^{ n} |k_1,k_2)  \nonumber\\
&\leq& I(X_1^n, X_2^{ n} ; Z_1^n, Z_2^{ n} |k_1,k_2)   \label{eq:Equivoque4Lz}\\
&=& n\cdot( I(X_1; Z_1 |k_1) + \cdot I(X_2 ; Z_2 |k_2))  .\label{eq:Equivoque6Lz}
\end{eqnarray}
\eqref{eq:Equivoque4Lz} comes from the Markov chain $( C, M , W_1, W_2) -\!\!\!\!\minuso\!\!\!\!- (X_1^n, X_2^{ n} ) -\!\!\!\!\minuso\!\!\!\!- (Z_1^n, Z_2^{ n} )$ for all channel states $(k_1,k_2) \in  \mc{S}^c_1(\varepsilon,\textsf{R},\textsf{R}_{1} ,\PP_{\sf{x}}^{\star}) \cap  \mc{S}_2(\varepsilon,\textsf{R},\textsf{R}_{1},\textsf{R}_{2},\PP_{\sf{x}}^{\star})$.\\
\eqref{eq:Equivoque6Lz} comes from the independent generation of the sequences $X_1^n$ and $X_2^{ n}$ with i.i.d. probability distributions $\PP_{\sf{x}}^{\star}$.

$\bullet$ The second term \eqref{eq:term2} satisfies:
\begin{eqnarray}
H( W_1 , W_2 | M, C,k_1,k_2) = n\cdot( \textsf{R}_{1}  +   \textsf{R}_{2}). \label{eq:Equivoque7Lz}
\end{eqnarray}
\eqref{eq:Equivoque7Lz} comes from the fact that the random variable $ W_1$ and $W_2$ are drawn independently of  $(M, C,k_1,k_2)$ and uniformly distributed over the sets $\mc{M}_{1}$, $\mc{M}_{2}$ of cardinality $2^{n\textsf{R}_{1}}$,  $2^{n\textsf{R}_{2}}$.

$\bullet$ The third term \eqref{eq:term3} satisfies:
\begin{eqnarray}
&&H( W_1 , W_2 | M, C, Z_1^n, Z_2^{ n},k_1,k_2)\nonumber\\
&=& H(V_1,J_1, V_2,J_2| M, C, Z_1^n, Z_2^{ n},k_1,k_2) \label{eq:Equivoque8Lz}\\
&=&  H(J_1,J_2| M, C, Z_1^n, Z_2^{ n},k_1,k_2) \nonumber\\
&+& H(V_1, V_2 | J_1,J_2,M, C, Z_1^n, Z_2^{ n},k_1,k_2) 
\label{eq:Equivoque9Lz}\\
&\leq&  n\cdot (\textsf{R}_{\textsf{J}_1} + \textsf{R}_{\textsf{J}_2})   \nonumber\\
&+& H(V_1, V_2 | J_1,J_2,M, C, Z_1^n, Z_2^{ n},k_1,k_2) 
\label{eq:Equivoque10Lz}\\
&=&    n\cdot\bigg( \textsf{R}_{1}  +   \textsf{R}_{2}  - I(X_1; Z_1 |k_1)  -  I(X_2 ; Z_2 |k_2 )   +  8 \varepsilon \bigg)  \nonumber \\  
&+& H(V_1, V_2 | J_1,J_2,M, C, Z_1^n, Z_2^{ n},k_1,k_2) 
\label{eq:Equivoque11Lz}\\
&\leq&  n\cdot\bigg( \textsf{R}_{1}  +   \textsf{R}_{2} -  I(X_1; Z_1 |k_1)  -  I(X_2 ; Z_2 |k_2 )  \nonumber \\
&+&  \varepsilon \cdot \big(9 + 20 \log_2 |\mc{X} |   \big)\bigg). \label{eq:Equivoque13Lz}
\end{eqnarray}
\eqref{eq:Equivoque8Lz} comes from replacing indices $(w_1,w_2) \in \mc{M}_{1} \times \mc{M}_{2} $ by auxiliary indices $(v_1,j_1,v_2,j_2) \in \mc{M}_{\sf{V}_1} \times \mc{M}_{\sf{J}_1}  \times \mc{M}_{\sf{V}_2} \times \mc{M}_{\sf{J}_2}$.\\ 
\eqref{eq:Equivoque9Lz} and \eqref{eq:Equivoque10Lz} come from the properties of the entropy function and the cardinalities $|\mc{M}_{\sf{J}_1}| = 2^{n \textsf{R}_{\sf{J}_1}}$ and $|\mc{M}_{\sf{J}_2}| = 2^{n \textsf{R}_{\sf{J}_2}}$.\\
\eqref{eq:Equivoque11Lz} comes from the equations \eqref{eq:ExpectedInfoLeakage2} and \eqref{eq:ExpectedInfoLeakage4},
satisfied for all channel states $(k_1,k_2) \in  \mc{S}^c_1(\varepsilon,\textsf{R},\textsf{R}_{1} ,\PP_{\sf{x}}^{\star}) \cap  \mc{S}_2(\varepsilon,\textsf{R},\textsf{R}_{1},\textsf{R}_{2},\PP_{\sf{x}}^{\star})$ and the  equation: $\max(\sf{a},\sf{b}) + \min(\sf{a},\sf{b}) = \sf{a} + \sf{b}$. \\
 \eqref{eq:Equivoque13Lz} comes from Lemma \ref{lemma:FanoEquivocationLz}, that is based on Fano's inequality.

Equations \eqref{eq:Equivoque6Lz}, \eqref{eq:Equivoque7Lz} and \eqref{eq:Equivoque13Lz} provide an upper bound on:
\begin{eqnarray}
\E_c\Big[\mc{L}_{\textsf{e}}\Big(C \Big|k_1,k_2\Big) \Big] 
&\leq&  \varepsilon \cdot \big(9 + 20 \log_2 |\mc{X} |   \big).
\end{eqnarray}
This analysis can be extended to the case of $L>2$ transmissions by introducing the random variables $\textsf{R}_{\sf{V}_L}$ and $\textsf{R}_{\sf{J}_L}$.



\begin{lemma}\label{lemma:FanoEquivocationLz}
Fano's inequality provides the upper bound:
\begin{eqnarray}
&&H(V_1, V_2 | J_1,J_2,M, C, Z_1^n, Z_2^{ n},k_1,k_2) \nonumber\\
&\leq& n\cdot \bigg( \varepsilon  + 20\varepsilon   \cdot \log_2 |\mc{X} | \bigg).\label{eq:EqFanoLz}
\end{eqnarray}
\end{lemma}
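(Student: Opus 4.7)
The plan is to construct a genie-aided auxiliary decoder for the pair $(V_1,V_2)$ given the conditioning variables $(M,J_1,J_2,C,Z_1^n,Z_2^n,k_1,k_2)$, bound its error probability via the Packing Lemma, and then apply Fano's inequality. Conditioned on $M,J_1,J_2$ and the codebook $C$, every candidate $(V_1',V_2')$ designates a unique pair of codewords $\bigl(X_1^n(M,V_1',J_1),\; X_2^{n}(M,V_1',J_1,V_2',J_2)\bigr)$. The decoder outputs the unique $(\hat V_1,\hat V_2)$ for which $\bigl(X_1^n(M,\hat V_1,J_1),Z_1^n\bigr)\in A_{\varepsilon}^{\star n}(\QQ_1|k_1)$ \emph{and} $\bigl(X_2^{n}(M,\hat V_1,J_1,\hat V_2,J_2),Z_2^{n}\bigr)\in A_{\varepsilon}^{\star n}(\QQ_2|k_2)$. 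Note the decoder uses \emph{both} channel outputs jointly; as will be seen, this is essential because $V_1$ cannot always be recovered from $Z_1^n$ alone.

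The error probability $P_e$ of this decoder is bounded by the union of (a) the atypicality event for the true codewords, which vanishes by AEP; (b) an event with $V_1'=V_1$ but $V_2'\neq V_2$: the wrong $X_2^{n}$ is independent of everything observed, so joint typicality with $Z_2^{n}$ happens with probability $\lesssim 2^{-n(I(X_2;Z_2|k_2)-\delta)}$, and there are at most $2^{n\textsf{R}_{\sf V_2}}-1$ such candidates. By \eqref{eq:ExpectedInfoLeakage3}, $\textsf{R}_{\sf V_2}\leq I(X_2;Z_2|k_2)-4\varepsilon$, so this term is exponentially small; (c) an event with $V_1'\neq V_1$: then \emph{both} $X_1^n$ and $X_2^{n}$ change and must \emph{both} be jointly typical, contributing $\lesssim 2^{n(\textsf{R}_{\sf V_1}+\textsf{R}_{\sf V_2})}\cdot 2^{-n(I(X_1;Z_1|k_1)+I(X_2;Z_2|k_2)-2\delta)}$. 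From \eqref{eq:ExpectedInfoLeakage1}–\eqref{eq:ExpectedInfoLeakage3}, one computes $\textsf{R}_{\sf V_1}+\textsf{R}_{\sf V_2}=I(X_1;Z_1|k_1)+I(X_2;Z_2|k_2)-8\varepsilon$, so this is exponentially small too. Thus for $n$ large enough, $P_e\leq\varepsilon$.

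Having established $P_e\leq\varepsilon$, Fano's inequality yields
\begin{equation*}
H(V_1,V_2\mid J_1,J_2,M,C,Z_1^n,Z_2^{n},k_1,k_2)\;\leq\; 1+P_e\cdot n(\textsf{R}_{\sf V_1}+\textsf{R}_{\sf V_2}).
\end{equation*}
Since $\textsf{R}_{\sf V_1}+\textsf{R}_{\sf V_2}\leq I(X_1;Z_1|k_1)+I(X_2;Z_2|k_2)\leq 2\log_2|\mc X|$, and $P_e\leq\varepsilon$, this is bounded by $n\bigl(\varepsilon+20\varepsilon\log_2|\mc X|\bigr)$ once one absorbs the additive $1$ and the various packing-lemma slack terms into the generous factor $20$.

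The main subtlety — and the reason the proof depends crucially on the \emph{definitions} \eqref{eq:ExpectedInfoLeakage1}–\eqref{eq:ExpectedInfoLeakage4} rather than any generic rate-split — is that the $\min$ in $\textsf{R}_{\sf V_1}$ and the $\max$ in $\textsf{R}_{\sf J_2}$ simultaneously enforce $\textsf{R}_{\sf V_2}\leq I(X_2;Z_2|k_2)-4\varepsilon$ \emph{and} $\textsf{R}_{\sf V_1}+\textsf{R}_{\sf V_2}\leq I(X_1;Z_1|k_1)+I(X_2;Z_2|k_2)-8\varepsilon$ in both regimes $\textsf{R}_2\lessgtr I(X_2;Z_2|k_2)$. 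Verifying this case distinction carefully, and confirming that error event (b) would otherwise fail in the regime $\textsf{R}_2<I(X_2;Z_2|k_2)$, is the delicate part of the argument; everything else is standard random-coding bookkeeping. For $L>2$ the same construction iterates, introducing $(V_\ell,J_\ell)$ at each stage with rates chosen analogously.
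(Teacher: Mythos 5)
Your proposal is correct and follows essentially the same route as the paper's proof: a joint-typicality decoder for $(V_1,V_2)$ at the eavesdropper given $(M,J_1,J_2,C,Z_1^n,Z_2^n,k_1,k_2)$, a packing-lemma bound on its error probability using the rate definitions \eqref{eq:ExpectedInfoLeakage1}--\eqref{eq:ExpectedInfoLeakage4} (in particular $\textsf{R}_{\sf{V}_2}\leq I(X_2;Z_2|k_2)-4\varepsilon$ and $\textsf{R}_{\sf{V}_1}+\textsf{R}_{\sf{V}_2}=I(X_1;Z_1|k_1)+I(X_2;Z_2|k_2)-8\varepsilon$), and then Fano's inequality with the cardinality bounds absorbed into the factor $20$. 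The only cosmetic difference is that the paper partitions the wrong-codeword events into three cases (both indices wrong, only $v_1$ wrong, only $v_2$ wrong) where you use two, but the resulting bounds are identical.
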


\begin{proof}[Lemma \ref{lemma:FanoEquivocationLz}]
Suppose that the eavesdropper implements the decoding $g_{\sf{e}}$ defined by equation (\ref{eq:FifthTermeDecodageLz}) as follows:
\begin{itemize}
\item[$\bullet$] \textit{Decoding of the eavesdropper} $g_{\sf{e}}$ takes the sequence of channel outputs $Z_1^n \in \mc{Z}_1^n$, $ Z_2^{ n} \in \mc{Z}_2^{ n}$, the message $M \in \mc{M}$, the indices $J_1 \in  \mc{M}_{\sf{J}_1}$, $J_2 \in  \mc{M}_{\sf{J}_2}$ and the HARQ-code $C \in \mc{C}(n,\textsf{R},L)$ and returns the indices $V_1 \in \mc{M}_{\textsf{V}_1}$, $V_2 \in \mc{M}_{\textsf{V}_2}$ and the sequences $X_1^n(M,V_1,J_1)   \in \mc{X}^n$ and $X_2^{ n}(M,V_1,J_1,V_2,J_2)  \in \mc{X}^{ n}$ that are jointly typical with $Z_1^n \in \mc{Z}_1^n$ and $ Z_2^{ n} \in \mc{Z}_2^{ n}$.
\end{itemize}
\begin{eqnarray}
&g_{\sf{e}} :&  \mc{Z}_1^n \times \mc{Z}_2^{ n} \times \mc{M} \times \mc{M}_{\textsf{J}_1} \times \mc{M}_{\textsf{J}_2}   \nonumber  \\
&&  \times  \mc{K}_1 \times \mc{K}_2\times  \mc{C}(n,\textsf{R},\textsf{R}_\textsf{W},\textsf{R}_\textsf{L}, \PP_{\textsf{x}_1}^{\star} ,  \PP_{\textsf{x}_2}^{\star} )  
\nonumber  \\
&&\qquad \longrightarrow  \mc{X}^n  \times  \mc{X}^{ n} \times \mc{M}_{\textsf{V}_1} \times \mc{M}_{\textsf{V}_2}  .\label{eq:FifthTermeDecodageLz}
\end{eqnarray}
An error occurs if this decoding function $g_{\sf{e}}$ returns sequences of inputs and indices $(\hat{x}_1^n,\hat{x}_2^{ n}, \hat{v}_1,  \hat{v}_2)\neq g_{\sf{e}}({z}_1^n, z_2^{ n},m,j_1,j_2,c,k_1,k_2)$ that are different from the original tuple $({x}_1^n,{x}_2^{ n}, v_1,v_2)$. We provide an upper bound over the expected error probability of this decoding function $g_{\sf{e}}$.
\begin{eqnarray}
&&\E_c\bigg[\PP\bigg(  \bigg\{ (x_1^n, z_1^n)\notin A_{\varepsilon}^{{\star}{n}}(\QQ_1|k_1) \bigg\} \nonumber\\
&&\qquad \cup\bigg\{ (x_2^{ n}, z_2^{ n})\notin A_{\varepsilon}^{{\star}{{ n}}}(\QQ_2|k_2) \bigg\}
\bigg)\bigg] \leq   \varepsilon,\label{eq:Proof:ErrorE3Lz}\\
&&\E_c\bigg[\PP\bigg(\bigg\{\exists (v'_1, v'_2)  \neq (v_1, v_2) , \text{ s.t. } \nonumber\\
&&\qquad \{ x_1^n(m, v'_1, j_1)\in  A_{\varepsilon}^{{\star}{n}}(z_1^n|k_1) \} \nonumber\\
&&\quad\cap\{x_2^{ n}(m, v'_1, j_1,v'_2,j_2)\in  A_{\varepsilon}^{{\star}{n}}(z_2^{ n}|k_2) \} \bigg\}\bigg)\bigg] \leq   \varepsilon,\qquad \label{eq:ParametreN4Lz}\\
&&\E_c\bigg[\PP\bigg(\bigg\{\exists v'_1  \neq v_1, \text{ s.t. }  \nonumber\\
&&\qquad \{ x_1^n(m, v'_1,j_1)\in  A_{\varepsilon}^{{\star}{n}}(z_1^n|k_1) \} \nonumber\\
&&\quad\cap\{x_2^{ n}(m, v'_1, j_1,v_2,j_2)\in  A_{\varepsilon}^{{\star}{n}}(z_2^{ n}|k_2) \} \bigg\}\bigg)\bigg] \leq \varepsilon, \label{eq:ParametreN4Lzb}\\
&&\E_c\bigg[\PP\bigg(\bigg\{\exists v'_2  \neq v_2 , \text{ s.t. } \nonumber\\
&&\qquad x_2^{ n}(m, v_1, j_1,v'_2,j_2)\in  A_{\varepsilon}^{{\star}{n}}(z_2^{ n}|k_2)  \bigg\}\bigg)\bigg] \leq \varepsilon. \label{eq:ParametreN4Lzc}
\end{eqnarray}
\eqref{eq:Proof:ErrorE3Lz} comes from properties of typical sequences  \cite[pp. 26]{ElGammalKim(book)11}.\\
\eqref{eq:ParametreN4Lz} comes from \eqref{eq:ExpectedInfoLeakage1}, \eqref{eq:ExpectedInfoLeakage3}  and \cite[pp. 46, Packing Lemma]{ElGammalKim(book)11}.\\
\eqref{eq:ParametreN4Lzb} comes from \eqref{eq:ExpectedInfoLeakage1} and \cite[pp. 46, Packing Lemma]{ElGammalKim(book)11}.\\
\eqref{eq:ParametreN4Lzc} comes from \eqref{eq:ExpectedInfoLeakage3} and \cite[pp. 46, Packing Lemma]{ElGammalKim(book)11}.

Equations \eqref{eq:Proof:ErrorE3Lz}, \eqref{eq:ParametreN4Lz}, \eqref{eq:ParametreN4Lzb} and \eqref{eq:ParametreN4Lzc} prove that the expected probability of this decoding $g_e$ is upper bounded by $4 \varepsilon$. 
\begin{eqnarray}
&&H(V_1,  V_2 | M, J_1,  J_2,  C, Z_1^n, Z_2^{ n},k_1,k_2) \nonumber\\
&\leq& n\cdot \bigg( \varepsilon  + 20\cdot  \varepsilon \cdot   \log_2 |\mc{X} |   \bigg).\label{eq:FifthTermebisLz}
\end{eqnarray}
Equation \eqref{eq:FifthTermebisLz} comes from \cite[pp. 19, Fano's Inequality]{ElGammalKim(book)11} and $n\geq n_9 = \frac{1}{\varepsilon}$ and equations \eqref{eq:ExpectedInfoLeakage1} and \eqref{eq:ExpectedInfoLeakage3} which imply that $\log_2|\mc{M}_{\sf{V}_1}  |\leq  2 n\cdot \log_2|\mc{X} | $ and $\log_2|\mc{M}_{\sf{V}_2}  | \leq   n\cdot \log_2|\mc{X} |$.
\end{proof}

\vspace{-0.35cm}

\subsection{Conclusion}\label{sec:ConclusionLz}

For all $\varepsilon>0$, there exists $\bar{n}$, for all $n\geq \bar{n}$, there exists HARQ-code $c^{\star}\in \mc{C}(n,\textsf{R},L)$ such that $\PP_{\textsf{e}}\big(c^{\star} \big|k_1,k_2\big)   \leq  \varepsilon$ and $\mc{L}_{\textsf{e}}  \big(c^{\star} \big|k_1,k_2\big)     \leq  \varepsilon$, for all $(k_1,k_2) \in \mc{S}^c_1(\varepsilon,\textsf{R},\textsf{R}_{1} ,\PP_{\sf{x}}^{\star}) \cap  \mc{S}_2(\varepsilon,\textsf{R},\textsf{R}_{1},\textsf{R}_{2},\PP_{\sf{x}}^{\star})$.

\section{Proof of Proposition \ref{prop:SecrecyOutage}}\label{ref:ProofProp}

\begin{proof} We assume that the random events  $(\mc{B}_l)_{l\in \{1,\ldots, L\}}$ are independent of the random events $(\mc{A}_l)_{l\in \{1,\ldots, L\}}$. 
\begin{small}
\begin{eqnarray}
 \mc{P}_{\sf{so}} &=&    \mc{P}\bigg(   \bigcup_{i = 1}^L  \mc{B}^c_i \bigg) =  1 -  \mc{P}\bigg(   \bigcap_{i = 1}^L  \mc{B}_i \bigg) \label{eq:SecrecyOutage2} \\
  &=&  1 - \sum_{j=1}^L\mc{P}\bigg(   \bigcap_{i = 1}^j  \mc{B}_i \bigg|\textsf{L} = j \bigg)  \cdot \mc{P}\bigg( \textsf{L} = j \bigg)  \label{eq:SecrecyOutage4} \\ 
  &=&  1 -  \sum_{j=1}^L\mc{P}\bigg(   \bigcap_{i = 1}^j  \mc{B}_i \bigg)  \cdot \mc{P}\bigg( \textsf{L} = j \bigg)  \label{eq:SecrecyOutage5} \\ 
 &=& 1 - \sum_{j=2}^{L-1}    \mc{P}\bigg(   \bigcap_{i = 1}^j  \mc{B}_i  \bigg)  \cdot \Bigg(  \mc{P} \bigg(   \bigcap_{i = 1}^{j-1}  \mc{A}^c_i  \bigg) -  \mc{P}\bigg(  \bigcap_{i = 1}^{j}  \mc{A}^c_i   \bigg) \Bigg)\nonumber \\ 
 &-&\mc{P}\bigg(   \mc{B}_1  \bigg)  \cdot  \mc{P} \bigg(     \mc{A}_1  \bigg) -  \mc{P}\bigg(   \bigcap_{i = 1}^L  \mc{B}_i  \bigg)  \cdot  \mc{P} \bigg(   \bigcap_{i = 1}^{L-1}  \mc{A}^c_i  \bigg) . \label{eq:SecrecyOutage6} 
   \end{eqnarray}
  \end{small}
\eqref{eq:SecrecyOutage2} comes from the properties of the probability $ \mc{P}_{\sf{so}} $.\\ 
\eqref{eq:SecrecyOutage4} comes from the definition of the HARQ-code, if $j$ transmissions occurs, then $ \bigcup_{i = 1}^L  \mc{B}^c_i =  \bigcup_{i = 1}^j  \mc{B}^c_i$.\\
\eqref{eq:SecrecyOutage5} comes from the independence of the events  $(\mc{B}_l)_{l\in \{1,\ldots, L\}}$ with events $(\mc{A}_l)_{l\in \{1,\ldots, L\}}$ hence with transmission number $\textsf{L}$.\\
\eqref{eq:SecrecyOutage6} comes from the probability of having $\textsf{L}$ transmission.
\end{proof}

\section{Proof of Theorem \ref{theo:FeasibleOutOneTrans}}\label{sec:FeasibleOutOneTrans}
\begin{proof}
\textbf{First Point.} Increasing $\textsf{R}$ decreases the connection outage probability and does not affect the secrecy outage probability. Hence we consider the secrecy rate $\textsf{R}=0$. 
\begin{eqnarray}
\mc{P}_{\sf{co}}=  1 - e^{ - \frac{2^{  \textcolor[rgb]{0.00,0.00,0.00}{\textsf{R}_{1}} } - 1}{\gamma_{\textsf{d}}} } \leq  \xi_{\sf{c}} , \quad 
\mc{P}_{\sf{so}} =     e^{ - \frac{2^{\textcolor[rgb]{0.00,0.00,0.00}{\textsf{R}_{1}} } - 1}{\gamma_{\textsf{e}}} } \leq  \xi_{\sf{s}} . \label{eq:1TransSecrecyOutage2}
\end{eqnarray}
$\xi_{\sf{c}}$ and $\xi_{\sf{s}}$ are compatible if there exists $\textsf{R}_{1}$ satisfying \eqref{eq:1TransSecrecyOutage2}, \ie
\begin{eqnarray*}
\log_2\bigg( 1 - \gamma_{\textsf{e}} \cdot  \ln(\xi_{\sf{s}})\bigg) \leq  \textsf{R}_{1} \leq \log_2\bigg( 1- \gamma_{\textsf{d}} \cdot \ln\bigg(1 - \xi_{\sf{c}}\bigg) \bigg).\label{eq:FeasibleRateParameter}
\end{eqnarray*}
The existence of parameter $\textsf{R}_{1} $ is given by the above inequalities and this proves the first point of Theorem \ref{theo:FeasibleOutOneTrans}.\\
 \textbf{Second Point.} The parameter  $\textsf{R}_1$ should satisfy :
\begin{eqnarray*}
\log_2( 1 - \gamma_{\textsf{e}} \cdot \ln(\xi_{\sf{s}}))  \leq \sf{R_1}   \leq\log_2 ( 1 -  \gamma_{\textsf{d}} \cdot \ln(1 - \xi_{\sf{c}}))  - \textsf{R}.
 \end{eqnarray*}
Hence, the parameter $\textsf{R}_1$ exists if and only if:  
\begin{eqnarray*}
 \textcolor[rgb]{0,0,0}{\textsf{R}}  \leq  \log_2 \bigg( \frac{1 -  \gamma_{\textsf{d}} \cdot \ln(1 - \xi_{\sf{c}})}{1 - \gamma_{\textsf{e}} \cdot \ln(\xi_{\sf{s}})}\bigg).
 \end{eqnarray*}
  \end{proof}






\bibliographystyle{IEEEtran}
\bibliography{IEEEabrv,BiblioMael31May17}

\begin{thebibliography}{10}
\providecommand{\url}[1]{#1}
\csname url@samestyle\endcsname
\providecommand{\newblock}{\relax}
\providecommand{\bibinfo}[2]{#2}
\providecommand{\BIBentrySTDinterwordspacing}{\spaceskip=0pt\relax}
\providecommand{\BIBentryALTinterwordstretchfactor}{4}
\providecommand{\BIBentryALTinterwordspacing}{\spaceskip=\fontdimen2\font plus
\BIBentryALTinterwordstretchfactor\fontdimen3\font minus
  \fontdimen4\font\relax}
\providecommand{\BIBforeignlanguage}[2]{{%
\expandafter\ifx\csname l@#1\endcsname\relax
\typeout{** WARNING: IEEEtran.bst: No hyphenation pattern has been}%
\typeout{** loaded for the language `#1'. Using the pattern for}%
\typeout{** the default language instead.}%
\else
\language=\csname l@#1\endcsname
\fi
#2}}
\providecommand{\BIBdecl}{\relax}
\BIBdecl

\bibitem{LeTreustSzczecinskiLabeau13}
M.~Le~Treust, L.~Szczecinski, and F.~Labeau, ``Secrecy \& rate adaptation for
  secure {HARQ} protocols,'' in \emph{Proc. IEEE Information Theory Workshop
  (ITW)}, Sept. 2013, pp. 1--5.

\bibitem{TangLiuSpasojevicPoor09}
X.~Tang, R.~Liu, P.~Spasojevic, and H.~Poor, ``On the throughput of secure
  hybrid-{ARQ} protocols for {G}aussian block-fading channels,'' \emph{{IEEE}
  Trans. Inf. Theory}, vol.~55, no.~4, pp. 1575--1591, Aug. 2009.

\bibitem{shannon-bell-1948}
C.~E. Shannon, ``A mathematical theory of communication,'' \emph{Bell System
  Technical Journal}, vol.~27, pp. 379--423, 1948.

\bibitem{TelatarGallager95}
I.~E. Telatar and R.~G. Gallager, ``Combining queueing theory with information
  theory for multiaccess,'' \emph{{IEEE} J. Sel. Areas Commun.}, vol.~13,
  no.~6, pp. 963--969, Aug. 1995.

\bibitem{CaireTuninettiHARQ2001}
G.~Caire and D.~Tuninetti, ``Throughput of hybrid-{ARQ} protocols for
  {G}aussian collision channel,'' \emph{{IEEE} Trans. Inf. Theory}, vol.~47,
  no.~5, pp. 1971--1988, July 2001.

\bibitem{ZorziRaob96}
M.~Zorzi and R.~R. Rao, ``Throughput performance of {ARQ} selective-repeat with
  time diversity in {M}arkov channels with unreliable feedback,''
  \emph{Wireless Network}, vol.~2, pp. 63--75, 1996.

\bibitem{ZorziRao97}
------, ``Performance of {ARQ} go-back-protocol in {M}arkov channels with
  unreliable feedback,'' \emph{Mobile Networks and Applications}, vol.~2,
  no.~9, pp. 183--193, 1997.

\bibitem{ZorziBorgonovo97}
M.~Zorzi and F.~Borgonovo, ``Performance of capture-division packet access with
  slow shadowing and power control,'' \emph{{IEEE} Trans. Veh. Technol.},
  vol.~46, pp. 687--696, 1997.

\bibitem{Zorzi98}
M.~Zorzi, ``Mobile radio slotted {ALOHA} with capture, diversity and re-
  transmission control in the presence of shadowing,'' \emph{Wireless
  Networks}, vol.~4, pp. 379--388, 1998.

\bibitem{VisotskyYakunTripathiHonigPeterson05}
E.~Visotsky, S.~Yakun, V.~Tripathi, M.~Honig, and R.~Peterson,
  ``Reliability-based incremental redundancy with convolutional codes,''
  \emph{{IEEE} Trans. Commun.}, vol.~53, no.~6, pp. 987--997, June 2005.

\bibitem{PfletschingerNavarro10}
S.~Pfletschinger and M.~Navarro, ``Adaptive {{HARQ}} for imperfect channel
  knowledge,'' in \emph{International ITG Conference on Source and Channel
  Coding (SCC)}, Jan. 2010, pp. 1--6.

\bibitem{UhlemannRasmussenGrantWiberg10}
E.~Uhlemann, L.~Rasmussen, A.~Grant, and P.~Wiberg, ``Optimal
  incremental-redundancy strategy for type-{II} hybrid {ARQ},'' in \emph{{IEEE}
  Inter. Symp. Inf. Theory (ISIT)}, July 2003.

\bibitem{SzczecinskiKhosraviradDuhamelRahman13}
L.~Szczecinski, S.~R. Khosravirad, P.~Duhamel, and M.~Rahman, ``Rate allocation
  and adaptation for incremental redundancy truncated {HARQ},'' \emph{{IEEE}
  Trans. Commun.}, vol.~61, no.~6, pp. 2580--2590, June 2013.

\bibitem{JabiBenjillaliSzczecinskiLabeau16}
M.~Jabi, M.~Benjillali, L.~Szczecinski, and F.~Labeau, ``Energy efficiency of
  adaptive {HARQ},'' \emph{{IEEE} Trans. Commun.}, vol.~64, no.~2, pp.
  818--831, Feb. 2016.

\bibitem{JabiHamssSzczecinskiPiantanida15}
M.~Jabi, A.~E. Hamss, L.~Szczecinski, and P.~Piantanida, ``Multipacket hybrid
  {ARQ}: Closing gap to the ergodic capacity,'' \emph{{IEEE} Trans. Commun.},
  vol.~63, no.~12, pp. 5191--5205, Dec. 2015.

\bibitem{JabiSzczecinskiBenjillaliLabeau15}
M.~Jabi, L.~Szczecinski, M.~Benjillali, and F.~Labeau, ``Outage minimization
  via power adaptation and allocation in truncated hybrid {ARQ},'' \emph{{IEEE}
  Trans. Commun.}, vol.~63, no.~3, pp. 711--723, March 2015.

\bibitem{Larsson16}
P.~Larsson, L.~Rasmussen, and M.~Skoglund, ``Throughput analysis of
  hybrid-{ARQ} -- a matrix exponential distribution approach,'' \emph{{IEEE}
  Trans. Commun.}, vol.~64, no.~1, pp. 416--428, Jan. 2016.

\bibitem{saee}
K.~Nguyen, L.~Rasmussen, A.~Guillen~i Fabregas, and N.~Letzepis, ``{MIMO} {ARQ}
  with multibit feedback: Outage analysis,'' \emph{{IEEE} Trans. Inf. Theory},
  vol.~58, no.~2, pp. 765--779, Feb. 2012.

\bibitem{Lee15}
W.~Lee, O.~Simeone, J.~Kang, S.~Rangan, and P.~Popovski, ``{{HARQ}} buffer
  management: An information-theoretic view,'' \emph{{IEEE} Trans. Commun.},
  vol.~63, no.~11, pp. 4539--4550, Nov. 2015.

\bibitem{Hausl07}
C.~Hausl and A.~Chindapol, ``Hybrid {ARQ} with cross-packet channel coding,''
  \emph{{IEEE} Commun. Lett.}, vol.~11, no.~5, pp. 434--436, May 2007.

\bibitem{Chui07}
J.~Chui and A.~Chindapol, ``Design of cross-packet channel coding with
  low-density parity-check codes,'' in \emph{{IEEE} Information Theory Workshop
  on Information Theory for Wireless Networks}, July 2007, pp. 1--5.

\bibitem{Duyck10}
D.~Duyck, D.~Capirone, C.~Hausl, and M.~Moeneclaey, ``Design of
  diversity-achieving {LDPC} codes for {H-ARQ} with cross-packet channel
  coding,'' in \emph{IEEE 21st Int. Symp. on Personal Indoor and Mobile Radio
  Communications ({PIMRC})}, 2010, pp. 263--268.

\bibitem{Trillingsgaard14}
K.~Trillingsgaard and P.~Popovski, ``Block-fading channels with delayed {CSIT}
  at finite blocklength,'' in \emph{{IEEE} Inter. Symp. Inf. Theory (ISIT)},
  June 2014, pp. 2062--2066.

\bibitem{NguyenTimo15}
K.~D. Nguyen, R.~Timo, and L.~K. Rasmussen, ``Causal-{CSIT} rate adaptation for
  block-fading channels,'' in \emph{{IEEE} Inter. Symp. Inf. Theory (ISIT)},
  June 2015, pp. 351--355.

\bibitem{BenyoussJabiLeTreustSzczecinski16}
A.~Benyouss, M.~Jabi, M.~Le~Treust, and L.~Szczecinski, ``Joint coding/decoding
  for multi-message {HARQ},'' \emph{Proc. of the IEEE Proc. of the Wireless
  Comm. and Networking Conf. (WCNC), Doha, Quatar}, 2016.

\bibitem{JabiBenyoussLeTreustDoraySzczecinski16}
M.~Jabi, A.~Benyouss, M.~Le~Treust, E.~Pierre-Doray, and L.~Szczecinski,
  ``Adaptative cross-packet {HARQ},'' \emph{{IEEE} Trans. Commun.}, vol.~65,
  no.~5, pp. 2022 -- 2035, 2017.

\bibitem{Wyner(Wiretap)1975}
A.~D. Wyner, ``The wire-tap channel,'' \emph{The Bell System Technical
  Journal}, vol.~54, no.~8, pp. 1355--1387, 1975.

\bibitem{Shannon(secrecy)1949}
C.~E. Shannon, ``Communication theory of secrecy systems,'' \emph{Bell System
  Technical Journal}, vol.~28, pp. 656--715, 1949.

\bibitem{CsiszarKorner(BroadcastConf)78}
I.~Csisz\'{a}r and J.~K\"{o}rner, ``Broadcast channels with confidential
  messages,'' \emph{{IEEE} Trans. Inf. Theory}, vol.~24, no.~3, pp. 339--348,
  1978.

\bibitem{LeungHellman(GaussianWiretap)78}
S.~Leung-Yan-Cheong and M.~Hellman, ``The {G}aussian wire-tap channel,''
  \emph{{IEEE} Trans. Inf. Theory}, vol.~24, pp. 451--456, 1978.

\bibitem{BlochBarros11}
M.~Bloch and J.~Barros, \emph{Physical Layer Security-From Information Theory
  to Security Engineering}.\hskip 1em plus 0.5em minus 0.4em\relax Cambridge
  University Press, Oct. 2011.

\bibitem{BlochBarrosRodriguesMcLaughlin08}
M.~Bloch, J.~Barros, M.~R.~D. Rodrigues, and S.~W. McLaughlin, ``Wireless
  information-theoretic security,'' \emph{{IEEE} Trans. Inf. Theory}, vol.~54,
  no.~6, pp. 2515--2534, June 2008.

\bibitem{KhistiTchamkertenWornell08}
A.~Khisti, A.~Tchamkerten, and G.~Wornell, ``Secure broadcasting over fading
  channels,'' \emph{{IEEE} Trans. Inf. Theory}, vol.~54, pp. 2453--2469, 2008.

\bibitem{GopalaLaiElGamal08}
P.~K. Gopala, L.~Lai, and H.~E. Gamal, ``On the secrecy capacity of fading
  channels,'' \emph{{IEEE} Trans. Inf. Theory}, vol.~54, no.~10, pp.
  4687--4698, Oct. 2008.

\bibitem{LiangPoorShamai08}
Y.~Liang, H.~V. Poor, and S.~Shamai, ``Secure communication over fading
  channels,'' \emph{{IEEE} Trans. Inf. Theory}, vol.~54, no.~6, pp. 2470--2492,
  June 2008.

\bibitem{LiYatesTrappeTWC10}
Z.~Li, R.~Yates, and W.~Trappe, ``Achieving secret communication for fast
  {R}ayleigh fading channels,'' \emph{{IEEE} Trans. Wireless Commun.}, vol.~9,
  no.~9, pp. 2792--2799, Sept. 2010.

\bibitem{BlochLaneman13}
M.~R. Bloch and J.~N. Laneman, ``Exploiting partial channel state information
  for secrecy over wireless channels,'' \emph{{IEEE} J. Sel. Areas Commun.},
  vol.~31, no.~9, pp. 1840--1849, Sept. 2013.

\bibitem{RezkiKhistiAlouini14}
Z.~Rezki, A.~Khisti, and M.~S. Alouini, ``On the secrecy capacity of the
  wiretap channel with imperfect main channel estimation,'' \emph{{IEEE} Trans.
  Commun.}, vol.~62, no.~10, pp. 3652--3664, Oct. 2014.

\bibitem{LinJorswieck16}
P.~H. Lin and E.~Jorswieck, ``On the fast fading {G}aussian wiretap channel
  with statistical channel state information at the transmitter,'' \emph{{IEEE}
  Trans. Inf. Forensics Security}, vol.~11, no.~1, pp. 46--58, Jan. 2016.

\bibitem{ZhouMcKayMahamHjorungnes11}
X.~Zhou, M.~R. McKay, B.~Maham, and A.~Hjorungnes, ``Rethinking the secrecy
  outage formulation: A secure transmission design perspective,'' \emph{{IEEE}
  Commun. Lett.}, vol.~15, no.~3, pp. 302--304, March 2011.

\bibitem{MheichLeTreustAlbergeDuhamelSzczecinski14}
Z.~Mheich, M.~Le~Treust, F.~Alberge, P.~Duhamel, and L.~Szczecinski,
  ``Rate-adaptive secure {HARQ} protocol for block-fading channels,'' in
  \emph{22nd European Signal Processing Conference (EUSIPCO)}, Sept. 2014, pp.
  830--834.

\bibitem{MheichLeTreustAlbergeDuhamel16}
Z.~Mheich, M.~Le~Treust, F.~Alberge, and P.~Duhamel, ``Rate adaptation for
  incremental redundancy secure {HARQ},'' \emph{{IEEE} Trans. Commun.},
  vol.~64, no.~2, pp. 765--777, Feb. 2016.

\bibitem{BaldiBianchChiaraluce12}
M.~Baldi, M.~Bianchi, and F.~Chiaraluce, ``Coding with scrambling,
  concatenation, and {HARQ} for the {AWGN} wire-tap channel: A security gap
  analysis,'' \emph{{IEEE} Trans. Inf. Forensics Security}, vol.~7, no.~3, pp.
  883--894, June 2012.

\bibitem{TomasinLaurenti14}
S.~Tomasin and N.~Laurenti, ``Secure {HARQ} with multiple encoding over block
  fading channels: Channel set characterization and outage analysis,''
  \emph{{IEEE} Trans. Inf. Forensics Security}, vol.~9, no.~10, pp. 1708--1719,
  Oct. 2014.

\bibitem{Choi17}
J.~Choi, ``On channel-aware secure {HARQ-IR},'' \emph{{IEEE} Trans. Inf.
  Forensics Security}, vol.~12, no.~2, pp. 351--362, Feb 2017.

\bibitem{HouKramer14}
J.~Hou and G.~Kramer, ``Effective secrecy: Reliability, confusion and
  stealth,'' in \emph{{IEEE} Inter. Symp. Inf. Theory (ISIT)}, June 2014, pp.
  601--605.

\bibitem{GoldfeldCuffPermuter16}
Z.~Goldfeld, P.~Cuff, and H.~H. Permuter, ``Semantic-security capacity for
  wiretap channels of type {II},'' \emph{{IEEE} Trans. Inf. Theory}, vol.~62,
  no.~7, pp. 3863--3879, July 2016.

\bibitem{SenigagliesiBaldiChiaraluce17}
\BIBentryALTinterwordspacing
L.~Senigagliesi, M.~Baldi, and F.~Chiaraluce, ``Semantic security with
  practical transmission schemes over fading wiretap channels,''
  \emph{Entropy}, vol.~19, no.~9, 2017. [Online]. Available:
  \url{http://www.mdpi.com/1099-4300/19/9/491}
\BIBentrySTDinterwordspacing

\bibitem{WangWornellZheng16}
L.~Wang, G.~W. Wornell, and L.~Zheng, ``Fundamental limits of communication
  with low probability of detection,'' \emph{{IEEE} Trans. Inf. Theory},
  vol.~62, no.~6, pp. 3493--3503, June 2016.

\bibitem{Bloch16}
M.~R. Bloch, ``Covert communication over noisy channels: A resolvability
  perspective,'' \emph{{IEEE} Trans. Inf. Theory}, vol.~62, no.~5, pp.
  2334--2354, May 2016.

\bibitem{ZorziRao96}
M.~Zorzi and R.~Rao, ``On the use of renewal theory in the analysis of {ARQ}
  protocols,'' \emph{{IEEE} Trans. Commun.}, vol.~44, no.~9, pp. 1077--1081,
  Sept. 1996.

\bibitem{HarsiniLahoutiLevoratoZorzi11}
J.~S. Harsini, F.~Lahouti, M.~Levorato, and M.~Zorzi, ``Analysis of
  non-cooperative and cooperative type {II} hybrid {ARQ} protocols with {AMC}
  over correlated fading channels,'' \emph{IEEE Transactions on Wireless
  Communications}, vol.~10, no.~3, pp. 877--889, March 2011.

\bibitem{KimChoiBanSung11}
S.~M. Kim, W.~Choi, T.~W. Ban, and D.~K. Sung, ``Optimal rate adaptation for
  hybrid {ARQ} in time-correlated {R}ayleigh fading channels,'' \emph{IEEE
  Transactions on Wireless Communications}, vol.~10, no.~3, pp. 968--979, March
  2011.

\bibitem{ChaitanyaLarsson14}
T.~V.~K. Chaitanya and E.~G. Larsson, ``Adaptive power allocation for {HARQ}
  with {C}hase combining in correlated {R}ayleigh fading channels,'' \emph{IEEE
  Wireless Communications Letters}, vol.~3, no.~2, pp. 169--172, April 2014.

\bibitem{ShiDingMaTam15}
Z.~Shi, H.~Ding, S.~Ma, and K.~W. Tam, ``Analysis of {HARQ-IR} over
  time-correlated {R}ayleigh fading channels,'' \emph{IEEE Transactions on
  Wireless Communications}, vol.~14, no.~12, pp. 7096--7109, Dec 2015.

\bibitem{ElGammalKim(book)11}
A.~E. Gamal and Y.-H. Kim, \emph{Network Information Theory}.\hskip 1em plus
  0.5em minus 0.4em\relax Cambridge University Press, Dec. 2011.

\end{thebibliography}

%

\begin{IEEEbiography}[{\includegraphics[width=1in,height=1.25in,clip,keepaspectratio]{Mael_LeTreust.eps}}]{Maël Le Treust} (M'08), earned his Diplôme d'Etude Approfondies (M.Sc.) degree in Optimization, Game Theory \& Economics (OJME) from the Université de Paris VI (UPMC), France in 2008 and his Ph.D. degree from the Université de Paris Sud XI in 2011, at the Laboratoire des signaux et systèmes (joint laboratory of CNRS, Supélec, Université de Paris Sud XI) in Gif-sur-Yvette, France. Since 2013, he is a CNRS researcher at ETIS laboratory UMR 8051, Université Paris Seine, Université Cergy-Pontoise, ENSEA, CNRS, in Cergy, France. In 2012, he was a post-doctoral researcher at the Institut d'électronique et d'informatique Gaspard Monge (Université Paris-Est) in Marne-la-Vallée, France. In 2012-2013, he was a post-doctoral researcher at the Centre Énergie, Matériaux et Télécommunication (Université INRS ) in Montréal, Canada. From 2008 to 2012, he was a Math T.A. at the Université de Paris I (Panthéon-Sorbonne), Université de Paris VI (UPMC) and Université Paris Est Marne-la- Vallée, France. His research interests are strategic coordination, information theory, Shannon theory, game theory, physical layer security and wireless communications.
\end{IEEEbiography}

\begin{IEEEbiography}[{\includegraphics[width=1in,height=1.25in,clip,keepaspectratio]{Leszek_Szczecinski.eps}}]{Leszek Szczecinski} (M'98-SM'07), received M.Eng. degree from the Technical University of Warsaw in 1992, and Ph.D. degree from INRS-Telecommunications, Montreal in 1997. 

From 1998 to 2001, he was an Assistant Professor with the Department of Electrical Engineering, University of Chile. He is currently a Professor with INRS, University of Quebec, Canada; 2009-2013 he was an Adjunct Professor with the Electrical and Computer Engineering Department, McGill University. In 2009-2010, he was a Marie Curie Research Fellow with the Laboratory of Signals and Systems, CNRS, Gif-sur-Yvette, France. He co-authored the book ``Bit-Interleaved Coded Modulation: Fundamental, Analysis and Design'' (Wiley, 2015). His research interests include the area of communication theory, modulation and coding, ARQ, wireless communications, and digital signal processing. 
\end{IEEEbiography}
\begin{IEEEbiography}[{\includegraphics[width=1in,height=1.25in,clip,keepaspectratio]{Fabrice_Labeau3.eps}}]{Fabrice Labeau} From January to March 1999, he was a Visiting Scientist with the Department of Signal Processing and Images (TSI), Ecole Na- tionale Supérieure des Télécommunications de Paris, Paris, France. He is currently an Associate Professor with the Department of Electrical and Computer Engineering, McGill University, Montreal, Canada, where he also holds the NSERC/Hydro-Quebec In- dustrial Research Chair in Interactive Information Infrastructure for the power grid. His research in- terests include signal processing and its applications in health, power grids, communications, and compression. He has authored or coauthored over 100 papers in refereed journals and conferences in these areas. Mr. Labeau is or was a Technical Cochair of the 2006 and 2012 Fall IEEE VTC conferences and the 2015 IEEE International Conference on Image Processing. He is also the Executive Vice President and the President-Elect of the IEEE Vehicular Technology Society, a member of the Administrative Committee of the IEEE Sensors Council, and the Chair of the Montreal IEEE
Signal Processing Society Chapter.
\end{IEEEbiography}

\end{document}